\documentclass[a4paper, UKenglish,cleveref, autoref, thm-restate]{lipics-v2021}

\pdfoutput=1 
\hideLIPIcs  

\title{Edit Distance of Finite-Valued Transducers} 

\author{Prince Mathew}{Universit\'e libre de Bruxelles, Belgium}{prince.mathew@ulb.be}{0000-0001-6410-1474}{}
\author{Saina Sunny}{Universit\'e libre de Bruxelles, Belgium}{saina.sunny@ulb.be}{0009-0005-1366-0168}{}

\authorrunning{P. Mathew, and S. Sunny} 

\Copyright{Prince Mathew and Saina Sunny} 

\ccsdesc[500]{Theory of computation~Transducers} 

\ccsdesc[300]{Theory of computation~Quantitative automata}

\keywords{Edit distance, Finite state transducers, Rational relations, Finite-valued, Multi-sequential} 

\acknowledgements{We would like to thank all the anonymous reviewers for their in-depth feedback which contributed to improving this paper.}

\nolinenumbers 

\usepackage{xspace}
\usepackage{graphicx} 
\usepackage{amsthm}
\usepackage{amsmath}
\usepackage{amsfonts}
\usepackage{mathtools}
\usepackage{xcolor}
\usepackage{url}
\usepackage{todonotes}
\usepackage{etoolbox}

\usepackage{tikz}
\usetikzlibrary{automata, positioning, arrows,calc}

\newcommand\dom{\mathit{dom}}

\newcommand\calA{\mathcal{A}}
\newcommand\calD{\mathcal{D}}

\newcommand\calS{\mathcal{S}}
\newcommand\calT{\mathcal{T}}
\newcommand\calP{\mathcal{P}}

\newcommand{\omicron}{\mathit{o}}
\newcommand{\N}{\mathbb{N}}
\newcommand{\autom}{\ensuremath\cal{A}}
\newcommand\CF[1]{\ensuremath\mathsf{#1}}
\renewcommand{\omicron}{\scalebox{.8}{$\mathcal{O}$}}

\renewcommand{\arraystretch}{1.5} 

\newcommand{\dlev}{\ensuremath{d_l}}

\EventEditors{Sayan Bhattacharya, Danupon Nanongkai, Michael Benedikt, and Gabriele Puppis}
\EventNoEds{4}
\EventLongTitle{53rd International Colloquium on Automata, Languages, and Programming (ICALP 2026)}
\EventShortTitle{ICALP 2026}
\EventAcronym{ICALP}
\EventYear{2026}
\EventDate{July 7--10, 2026}
\EventLocation{Royal Holloway, University of London, Egham, United Kingdom}
\EventLogo{}
\SeriesVolume{374}
\ArticleNo{189}

\begin{document}

\maketitle

\begin{abstract}
Transducers generalise automata by producing output word(s) for each input word, thereby defining a relation over words. 
A transducer is said to be \emph{finite-valued} if, for every input word, it produces at most \(k\) output words, for some constant \(k\). If \(k = 1\), then the transducer is said to be \emph{functional}. The \emph{edit distance} between two transducers is the minimal number of edits required to transform every output of one transducer into some output of the other, for each input word. This notion has been studied for functional transducers, where it is shown to be computable. However, it is uncomputable for transducers in general.  In this work, we show the computability of the edit distance of finite-valued transducers, a class that is strictly more expressive than functional transducers.
  
\end{abstract}

\section{Introduction}
The theory of automata and transducers provides a rich foundation for modelling and reasoning about languages and relations and has numerous applications, particularly in program synthesis and the formal verification of hardware and software systems. While automata map words to Boolean values, word transducers extend this framework by modelling transformations: they read input words and produce output words using finite memory, thereby defining relations between words, known as rational relations.
Their study dates back to the early days of computer science, where they were referred to as generalised sequential machines~\cite{Raney1958SequentialF,Ginsburg1968-GINACO-2, DBLP:conf/stacs/MuschollP19,DBLP:journals/siglog/FiliotR16} {and the references therein}. Their ability to capture input–output behaviour makes them a natural tool in diverse applications such as natural language processing, formal verification, and program analysis.

A notable subclass of transducers is that of finite-valued transducers. They compute rational relations where each input word is mapped to at most $k$ output words, for a fixed $k\in\N$. Such relations are called $k$-valued (or \emph{finite-valued}) rational relations. When $k=1$, the rational relation is called a \emph{rational function}. Many problems that are undecidable in general for transducers become decidable under finite-valuedness. For instance, inclusion and equivalence are undecidable in general~\cite{FischerR68}, but become decidable under the finite-valued restriction \cite{CulikEquivalence,Weber93,Souza08}.
The subclass of finite-valued relations is decidable within the class of rational relations, i.e., given a finite state transducer $\calT$, it is decidable in polynomial time whether $\calT$ is a finite-valued transducer~\cite{Weber89,SakarovitchS08MFCS}. In fact, it is also decidable in polynomial time to check if $\calT$ recognises a $k$-valued rational relation for a given $k\in\N$~\cite{GurariI83,SakarovitchS08MFCS}. Thus, finite-valued transducers strike a balance between expressive power and algorithmic tractability.

Recently, a quantitative notion of distance between transducers was introduced in \cite{editdistance} by lifting standard word metrics, most notably the edit distance, from words to word-to-word transformations. The edit distance measures the minimum number of edits (such as inserting or deleting a letter, or substituting a letter with another) required to transform one word into another. The edit distance between transducers provides such a quantitative measure: for every input word, it asks how many edits are required to transform every output of one transducer into some output of the other. This notion generalises the classical Boolean equivalence problem of two transducers to a numerical setting, where two transducers are assigned a distance, and are considered close if their distance is finite (reducing to standard equivalence when the distance is zero). This arises naturally in applications such as comparing text-based tools, such as spell checkers or grammatical correction tools. This distance between transducers was used in the study of approximate counterparts of fundamental decision problems for rational relations—such as functionality, determinisability, and uniformisation~\cite{approximate}. It is shown in~\cite{editdistance} that the edit distance of functional transducers is computable; however, the problem is uncomputable for transducers in general~\cite{Sainathesis}.

This naturally raises the question: Is there an expressive yet well-behaved class of transducers, beyond the functional case, where edit distance remains computable? In this work, we answer this question affirmatively for finite-valued transducers, showing that their edit distance is computable. 
This result extends the computability of edit distance beyond functional transducers and provides a foundation for approximate reasoning within a class of transducers that is both expressive and practically motivated. 

The procedure for computing the edit distance between finite-valued transducers exploits the fact that every finite-valued transducer can be decomposed into a finite union of functional transducers \cite{Weber93,SakarovitchS08}. 
We reduce the problem of computing the distance between finite-valued transducers to that of computing the distance between multi-sequential transducers (equivalent to a finite union of sequential transducers --- those that are deterministic in the input). This further reduces to a new notion we introduce in this work, called the \emph{relative distance} between a function and a relation.
The \emph{relative distance} between a function \( f \) and a relation \( R \) is defined as the least upper bound, over all inputs, of the minimal edit distance between the output of \( f \) and \emph{some} output of \( R \) on the same input. Intuitively, this captures how far, in the worst case, the behaviour of \( f \) deviates from that of \( R \). 
Beyond its technical role in our reductions, relative distance also provides a natural framework for \emph{approximate verification}. In classical (exact) verification of reactive systems (systems that continuously interact with an environment by consuming inputs and producing outputs), it is checked whether the given reactive system (modelled by $f$) exactly satisfies a user-defined specification (modelled by $R$). The notion of relative distance makes it possible to reason about reactive systems that are ``close'' to a given specification, which is crucial when exact conformance fails, and minor deviations to the specification are acceptable.

In summary, we extend the known results on functional transducers to finite-valued transducers, showing that their edit distance is computable. The techniques we develop not only advance the theory of transducer comparison but also provide tools for approximate reasoning and verification of transducer-defined relations.  

\section{Preliminaries}
\label{sec:prelims}
We now recall the basic notions and notations that will be used throughout the paper. Let $\mathbb{N}$ denote the set of natural numbers. 
For every $k \in \mathbb{N}$, we use $[k]$ to denote the set $\{1,2,\ldots, k\}$.

\subparagraph*{Words and relations.} Let $A$ and $B$ denote finite alphabets of letters. A \emph{word} is a sequence of letters. The empty word is denoted by $\varepsilon$. The length of a word is denoted by $|w|$.
The set of all finite words over the alphabet $A$ is denoted by $A^*$.  
A binary \emph{relation} $R$ between two sets $U$ and $V$ is a subset of the Cartesian product $U \times V$. The domain of $R$ is defined as $\dom(R) = \{u \mid (u,v) \in R\}$. For $u \in U$, the set of elements related to $u$ under $R$ is denoted as $R(u) = \{v \mid (u,v) \in R\}$. Note that $u\in\dom(R)$ iff $R(u)$ is nonempty.  A relation $R$ is said to be \emph{$k$-valued} (resp. \emph{exactly $k$-valued}) for some $k \geq 1$, if for all $u \in \dom(R)$, the cardinality of the set $R(u)$ is at most $k$ (resp. exactly $k$). A \emph{function} $f: U \rightarrow V$ is a relation over $U \times V$ which is exactly 1-valued. The function $f$ is \emph{partial} if $\dom(f) \subseteq U$.

\subparagraph*{Finite state automata.} A finite state automaton is a fundamental model of computation that accepts a set of words using finite memory.
\begin{definition}
\label{def:fsa}
A (nondeterministic) finite state automaton $\autom$ is a tuple $(Q, A, I,\Delta,F), \text{ where }$ $Q$ is a finite set of states, 
 $A$ is the input alphabet,
     $I \subseteq Q$ is the set of initial states, 
 $\Delta \subseteq Q \times (A \cup \{\varepsilon\})  \times Q$ is the finite set of transitions, and 
 $F \subseteq Q$ is the set of final states.
\end{definition}
For $a \in A \cup \{\varepsilon\}$, we write $p \xrightarrow{a}q$ whenever $(p,a,q) \in \Delta$. A \emph{run} of $\autom$ on an {input} word $w=a_1 \cdots a_n$, where $a_i \in A \cup \{\varepsilon\}$ for all $i \in [n]$, is a sequence of transitions $q_0 \xrightarrow{a_1} q_1 \cdots \xrightarrow{a_n} q_n$. We use $q_0 \xrightarrow{w}q_n$ to denote this. The run is  \emph{accepting} if  $q_0 \in I$ and $q_n \in F$. The \emph{language} of $\autom$, denoted by $L(\autom)$, is the set of words $w$ s.t.~$\autom$ has an accepting run on $w$. 

We say that a finite state automaton is \emph{complete} if for every state $q \in Q$ and for every input symbol $a \in A$, there exists $q'\in Q$ s.t.~$(q, a, q') \in \Delta$. An automaton is \emph{real-time} if it has \emph{no} transitions of the form $p \xrightarrow{\varepsilon} q$.
An automaton is said to be \emph{trim} if, for any state $q$, there exists at least one accepting run visiting $q$. A finite state automaton is \emph{deterministic} if $I$ is a singleton set and the set of transitions is a (partial) function from $Q \times A$ to $Q$. The problem of checking equivalence is $\CF{PSPACE}$-complete for nondeterministic finite state automata (NFA)~\cite{stock73} and in polynomial time for deterministic~\cite{SteHun85} finite state automata (DFA).

\subparagraph*{Finite state transducers.} Transducers extend finite state automata by labelling each transition and each final state by an output word (possibly empty), thereby realising word-to-word relations.
\begin{definition} A (nondeterministic) finite state transducer $\calT$ with input alphabet $A$ and output alphabet $B$ is a tuple 
$(Q,A, B, I,\Delta, F, \lambda, \omicron) \text{ where,}$
 $(Q, A, I,\Delta,F)$  is a nondeterministic finite state automaton over $A$ (also called the \emph{underlying automaton} of $\calT$),
 $\lambda: \Delta \rightarrow B^*$ is a labelling function that maps transitions to output words over $B$, and 
 $\omicron: Q \rightarrow B^*$ is a labelling function that maps states to output words over $B$. 
\end{definition}
We write $p \xrightarrow{a \mid v}_{\calT}q$, if $(p,a,q) \in \Delta$ and $v = \lambda(p,a,q)$. When the transducer $\calT$ is clear from the context, we omit the subscript and simply write $p \xrightarrow{a \mid v} q$.  
A \emph{run} of $\calT$ labelled by an \emph{input} word $a_1 \cdots a_n$, where $a_i \in A \cup \{\varepsilon\}$ for all $i \in [n]$, is a sequence of transitions $q_0 \xrightarrow{a_1 \mid v_1} q_1 \cdots \xrightarrow{a_n \mid v_n} q_n$. We denote such a run by $q_0 \xrightarrow{a_1 \cdots a_n \mid v_1 \cdots v_n} q_n$. The \emph{output} word along this run is $v_1 \cdots v_n \cdot \omicron(q_n)$. 
The run is \emph{accepting} if $q_0 \in I$ and $q_n \in F$. A \emph{loop rooted at}  state $q$ of $\calT$ is a run of $\calT$ s.t.~the run starts and ends at $q$. The relation \emph{defined}/\emph{recognised} by $\calT$, denoted as $[\![ \calT ]\!]  \subseteq A^*\times B^*$, is defined as 
\begin{equation*}
[\![ \calT ]\!] = \{ (u,v)\mid v \in B^* \text{ is the output along some accepting run of $\calT$ labelled by $u$}\}.
\end{equation*}
Relations defined by transducers are called \emph{rational relations}. For convenience, we use $\calT(u)$ for $[\![\calT]\!](u)$, and use $\dom(\calT)$ to denote $\dom([\![\calT]\!])$, called the \emph{domain} of the transducer $\calT$. 
We say that a transducer is \emph{complete} (resp. \emph{real-time, trim}) if its underlying automaton is complete (resp. real-time, trim). The Cartesian product of two transducers $\calT_i = (Q_i, A, B, I_i, \Delta_i, F_i, \lambda_i, \omicron_i)$ for $i \in [2]$, denoted by $\calT_1 \times \calT_2$, is defined as the transducer $(Q_1 \times Q_2,\, A,\, (B \cup \{\varepsilon\}) \times (B \cup \{\varepsilon\}),\, I_1 \times I_2,\, \Delta,\, F_1 \times F_2,\, \lambda,\, \omicron) \text{ where }$ $((p_1,p_2),a,(q_1,q_2)) \in \Delta$ if $(p_i,a,q_i) \in \Delta_i$ for each $i \in [2]$, $\lambda((p_1,p_2),a,(q_1,q_2)) = \big(\lambda_1(p_1,a,q_1),\, \lambda_2(p_2,a,q_2)\big)$, and $\omicron(p_1,p_2) = \big(\omicron_1(p_1),\, \omicron_2(p_2)\big)$ for all $(p_1,p_2) \in Q_1 \times Q_2$.

We will now review some important subclasses of rational relations that capture natural restrictions on transducers encountered in this work.

\subparagraph*{Subclasses of rational relations.} Let $\calT$ be a real-time transducer. If $[\![ \calT ]\!]$ is $k$-valued for some $k \in \N$, then $\calT$ is said to be \emph{finite-valued} (or $k$-valued). The relations recognised by such transducers are called \emph{finite-valued rational relations}. For example, the $k$-subword relation $\{(u,v) \mid u,v \in \{a,b\}^*,\ v \text{ is a subword of } u \text{ of length } k\}$ is finite-valued. Since for any input word $u$, the number of possible outputs $v$ is bounded by $2^k$, corresponding to all words of length $k$ over the alphabet $\{a,b\}$.

If $[\![ \calT ]\!]$ is functional, then $\calT$ is said to be a \emph{functional transducer}. The functions recognised by functional transducers are called \emph{rational functions}. An example is the function $f_{last}: u \sigma \mapsto \sigma u$ where $u \in \{a,b\}^*, \sigma \in \{a,b\}$; that moves the last letter of the input word to the beginning. If the underlying automaton of $\calT$ is unambiguous (i.e., has at most one accepting run on any input), then $\calT$ is referred to as an \emph{unambiguous transducer}. It is well-known that a function is recognised by a real-time transducer iff it is recognised by a rational transducer~\cite{Berstel} iff it is recognised by an unambiguous transducer~\cite{eilenberg1974automata}.

A transducer is called \emph{sequential} if its underlying automaton is deterministic.  
Such transducers define functions known as \emph{sequential functions}. 
In this case, 
the transition relation $\Delta$ becomes a function $\Delta: Q \times A \to Q$. 
Sequential functions are a strict subset of rational functions. For instance, the function $f_{last}$ is not sequential since, intuitively,  the transducer recognising $f_{last}$ must guess the last letter initially.

Any finite-valued transducer is known to be (effectively) equivalent to a finite union of unambiguous transducers~\cite{Weber93}. Consequently, any finite-valued rational relation can be expressed as a finite union of rational functions.
A notable strict subclass of finite-valued rational relations is the class of \emph{multi-sequential relations}, which corresponds to a finite union of sequential functions~\cite{DBLP:journals/ijfcs/JeckerF18}. A transducer is said to be \emph{multi-sequential} if it is a finite union of state-disjoint sequential transducers. The function $f_{last}$ is multi-sequential as it can be expressed as the union of two sequential functions $f_a: ua \mapsto au$ and $f_b: ub \mapsto bu$. However, the function $f_{\sf last}^\ast$ that maps $u_1\# \cdots u_n\#$, for $n \geq 1$, to $f_{\sf last}(u_1)\# \cdots f_{\sf last}(u_n)\#$ for some separator $\#$ is $1$-valued, but not multi-sequential.

\subparagraph{Distances between words and word relations.} A metric $d$ on a set $E$ is a mapping
$d : E^2\rightarrow\mathbb{R}^+\cup \{\infty\}$ satisfying the following properties. For any $u,v,w \in E$, 
$d(u,v)=0 \iff u=v$ (separation),
 $d(u,v) = d(v,u)$ (symmetry), and
  $d(u,v) \leq d(u,w) + d(w,v)$ (triangle inequality). 
The most commonly studied metrics between finite words are the \emph{edit distances}. 
An edit distance between two words is the minimum number of edit operations required to rewrite one word into another if possible, and $\infty$ otherwise. Different choices of permitted operations give rise to different edit distances. \Cref{table:editdistance} lists some commonly used edit distances along with their corresponding sets of allowed operations. They are referred to as the \emph{Levenshtein family of distances}, as they are equivalent up to a constant factor \cite{editdistance}. 

{
\renewcommand{\arraystretch}{0.95}
\setlength{\tabcolsep}{5pt}
\begin{table}[h!]
\centering 
\begin{tabular}{|p{4.8cm}|p{1.2cm}|p{6cm}|}
\hline
Edit distances & Notation & Permissible edit operations \\
\hline
Longest Common Subsequence   & \ $d_{lcs}$   & insertions and deletions   \\
Levenshtein   & \ $\dlev$   & insertions, deletions and substitutions   \\
Damerau-Levenshtein   & \ $d_{dl}$   & insertions, deletions, substitutions and swapping adjacent letters   \\
\hline
\end{tabular}
\caption{Levenshtein family of edit distances and their allowed operations.}
\label{table:editdistance}
\end{table}
}

In the literature, the concept of distance between two words is naturally lifted to distance between a word and a set of words, or between two sets of words, and so on. The study of such extensions has received considerable attention in the literature.
The distance between two languages $L$ and $L'$ can be defined in a standard way, by resorting to the Hausdorff distance, denoted by $H_d(L,L')$. Informally, it is defined to be the least upper bound of all the distances from an element in one set to the ``closest element'' in the other set. 
In order to precisely define the Hausdorff distance, we first introduce the notion of directed distance. 
The \emph{directed distance} $\overset{\rightsquigarrow}{d}$ from $L$ to $L'$ is defined as
$\overset{\rightsquigarrow}{d}(L, L') =  \sup_{w \in L} \inf_{w' \in L'} d(w,w')$.
The \emph{Hausdorff distance} between $L$ and $L'$ is defined as
$H_d(L,L') = \max \left \{\overset{\rightsquigarrow}{d}(L, L'), \overset{\rightsquigarrow}{d}(L', L) \right \}$.  
Distances between words and languages can be lifted to that between relations over words. 

\begin{definition}[Distance between relations]\label{def:distrel}
 Given a metric $d$ on words, and two relations $R, S\subseteq A^*\times B^*$, the distance between $R$ and $S$, denoted as $d(R, S)$, is defined as follows.
 \[
d(R,S) =
\begin{cases}
\displaystyle
\sup \Bigl\{ H_d\bigl(R(w), S(w)\bigr) \;\Big|\; w \in \dom(R) \Bigr\}
& \text{if } \dom(R)=\dom(S),\\[6pt]
\infty
& \text{otherwise}.
\end{cases}
\]
\end{definition}
It was shown in~\cite{approximate, Sainathesis} that $d$ defines a metric over relations.  
Consequently, $R$ and $S$ are equivalent iff $d(R,S)= 0$. Observe that $d(R,S)<\infty$ iff $\dom(R) = \dom(S)$ and there exists a constant  $k\geq 0$ s.t.~for every word $u$ in the domain and for every output $v_1 \in R(u)$, there exists some output $v_2 \in S(u)$ with $d(v_1,v_2)\leq k$, and vice-versa. 

\begin{example}
\label{app:exdistance}
Consider the following rational relations $R$ and $S$ over $\{a,b\}^* \times \{a,b\}^*$ and $d$ to be a metric given in \Cref{table:editdistance}.
\begin{enumerate}

\item Let 
$R: w \mapsto \{a^{|w|},\, b^{|w|}\}, \ 
S: w \mapsto \{a^{|w|}\}.
$
For any input $w$, we have $R(w) = \{a^{|w|}, b^{|w|}\}$ and $S(w) = \{a^{|w|}\}$.  
Clearly, $d(a^{|w|},a^{|w|}) = 0$ and $d(b^{|w|},a^{|w|}) = |w|$. Hence,
\[
H_d\bigl(R(w),S(w)\bigr) \;=\; 
\max \left\{ \overset{\rightsquigarrow}{d}\bigl(R(w), S(w)\bigr), \ \overset{\rightsquigarrow}{d}\bigl(S(w), R(w)\bigr) \right\} 
= \max\{|w|,0\} = |w|.
\] 
As the input length $|w|$ increases, $H_d\bigl(R(w),S(w)\bigr)$ increases unboundedly (since the edit distance between $b^{|w|}$ and $a^{|w|}$ increases with $|w|$). Therefore, $d(R,S) = \infty$. 

\item Let 
$
R: w \mapsto \{(ab)^{|w|},\, (ba)^{|w|}\}, 
\  
S: w \mapsto \{(ab)^{|w|}\}.
$
For any input $w$, we have $R(w) = \{(ab)^{|w|}, (ba)^{|w|}\}$ and $S(w) = \{(ab)^{|w|}\}$. 
One can observe that $d\big((ab)^{|w|},(ab)^{|w|}\big)=0$, while 
$d\big((ba)^{|w|},(ab)^{|w|}\big)=2$ — obtained by deleting the initial $b$ and inserting a $b$ at the end — whenever $|w|>0$, and is $0$ otherwise. 
Thus, $H_d\bigl(R(w),S(w)\bigr) \leq  2$ for any input $w$, and hence $d(R,S) = 2$.
\end{enumerate}
\end{example}

The \emph{distance between transducers} $\calT$ and $\calS$, denoted by $d(\calT,\calS)$, is defined as the distance between the relations they recognise. 
In the case of edit distances, finite distance means that $\dom(\calT)= \dom(\calS)$ and for all input $u \in \dom(\calT)$, any output of $\calT$ on $u$ can be converted to an output of $\calS$ on $u$ by doing a bounded number of edits, and conversely, any output of $\calS$ on $u$ 
can be transformed into some output of $\calT$ on $u$ with a bounded number of edits.  This is relevant for comparing transducers with text-based output, such as spell checkers.

In general, the distance between two transducers is uncomputable. This follows from the undecidability of the equivalence problem for transducers~\cite{FischerR68}, which reduces to checking whether the distance is zero. Notably, it has been shown in~\cite{editdistance} that the distance between two functional transducers (or rational functions) is computable with respect to various edit distances, including the Levenshtein family of distances (See \Cref{table:editdistance}).
An analysis of their proof gives a doubly exponential space upper bound for the Levenshtein family of distances. The computability of distance in this context relies on the notion of \emph{conjugacy} of words~\cite{decidingconjugacy}. 

\begin{definition}[Conjugacy]
    Two words $u$ and $v$ are \emph{conjugate}, denoted by $u \sim  v$, if there exist words $x$ and $y$ s.t.~$u=xy$ and $v =yx$. In other words, they are cyclic shifts of each other.
\end{definition} 
For example, the words $aabb$ and $bbaa$ are conjugate with $x = aa$ and $y = bb$, whereas $aabb$ and $abab$ are not conjugate. The conjugacy defines an equivalence relation over words.
\begin{proposition}\label{prop:conj}
    Let $x,y,x',y',u,v$ be words and $C \in \mathbb{N}$. For any metric $d \in \{\dlev, d_{lcs}, d_{dl}\}$, \\ $d(xu^ky,x'v^{k}y') \leq C$ for all $k$ in some infinite subset $\mathbb{I} \subseteq \mathbb{N}$, iff $u$ and $v$ are conjugate.
\end{proposition}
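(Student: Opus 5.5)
The plan is to reduce everything to one convenient metric in each direction and then argue combinatorially on words. The three metrics are comparable: $d_{dl} \le \dlev \le d_{lcs}$, since every substitution is a deletion plus an insertion and every Levenshtein operation is also a Damerau--Levenshtein operation. Hence it suffices to prove the ``if'' direction (conjugate $\Rightarrow$ bounded) for $d_{lcs}$, and the ``only if'' direction (bounded $\Rightarrow$ conjugate) for $d_{dl}$. For the latter I may also use $\dlev \le 2\,d_{dl}$, because a swap of adjacent letters costs at most two substitutions. So it is enough to assume $\dlev(xu^ky,x'v^ky') \le 2C$ for all $k\in\mathbb{I}$.

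\emph{Conjugate $\Rightarrow$ bounded.} Write $u=st$ and $v=ts$. For $k\ge 1$ we have $u^k = s\,(ts)^{k-1}t = s\,v^{k-1}\,t$, so
\[
xu^ky = (xs)\,v^{k-1}\,(ty) \quad\text{and}\quad x'v^ky' = x'\,v^{k-1}\,(vy').
\]
Delete the prefix $xs$ and the suffix $ty$, then insert $x'$ and $vy'$. This costs at most $|x|+|y|+|x'|+|y'|+2|u|$ under $d_{lcs}$, a constant independent of $k$. The bound therefore holds for every $k$, so in particular for an infinite set $\mathbb{I}$. The case $k=0$ and the case of empty $u=v$ are trivial.

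\emph{Bounded $\Rightarrow$ conjugate.} First, any edit distance is at least the length difference of the two words. So $\bigl|\,|x|+|y|-|x'|-|y'| + k(|u|-|v|)\,\bigr| \le 2C$ for infinitely many $k$, which forces $|u|=|v|$. If $|u|=0$ then $u=v=\varepsilon$ are conjugate, so assume $n=|u|=|v|>0$.

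Fix an optimal Levenshtein edit script for some large $k\in\mathbb{I}$ and view it as an alignment. The positions of $xu^ky$ not touched by any of the at most $2C$ operations form at most $2C+1$ maximal blocks. Each block is copied verbatim onto a contiguous factor of $x'v^ky'$. The untouched positions number at least $|xu^ky|-2C$, so by pigeonhole some block has length at least $(kn - 2C)/(2C+1)$. Now choose $k$ large enough that this exceeds $|x|+|y|+|x'|+|y'|+n$. Trimming off the parts of this block lying in $x$ or $y$, and then the parts whose images lie in $x'$ or $y'$, leaves a common factor $w$ of length at least $n$. This $w$ is a factor of $u^k$ and simultaneously a factor of $v^k$.

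Every factor of length exactly $n$ of $u^k$ is a cyclic shift of $u$. Hence the length-$n$ prefix of $w$ is a conjugate of $u$ and also a conjugate of $v$. Since conjugacy is an equivalence relation, $u\sim v$.

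The main obstacle is making the ``untouched blocks'' argument rigorous. This requires fixing a precise notion of an alignment or trace induced by an edit script, so that the unedited positions really map order-preservingly and contiguously within each block. Doing this for $\dlev$ rather than for swaps is exactly why I first pass from $d_{dl}$ to $\dlev$ at the cost of the factor $2$.
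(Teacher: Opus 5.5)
Your proof is correct, and it takes a different route from the paper's in the key step.

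\textbf{The paper's route.} It treats the three metrics uniformly. Once $|u|=|v|$ is known, it asserts that for large $k$ the powers of $u$ and $v$ must overlap on a factor of length at least $|u|+|v|$. It then invokes Fine and Wilf's theorem to get conjugate primitive roots, and hence $u\sim v$.

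\textbf{Your route.} You fix one metric per direction: $d_{lcs}$ for the upper bound, and $\dlev$ (via $d_{dl}\le\dlev\le 2d_{dl}$) for the converse. You make the ``large matching portion'' precise by pigeonholing over the runs of match columns of an optimal alignment. You then observe that, because $|u|=|v|=n$, a common factor of $u^k$ and $v^k$ of length merely $n$ is already a common rotation. So neither Fine--Wilf nor primitive roots are needed; they would only matter if $|u|=|v|$ were not already established.

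\textbf{What each approach buys.} Yours is elementary and gives an explicit threshold for $k$ in terms of $C$ and $|x|,|y|,|x'|,|y'|$; the paper's threshold $k\ge 2^{|u|+|v|}$ does not depend on these. The cost is relying on the alignment characterisation of Levenshtein distance, which is exactly why your detour away from swaps is sensible. When you write it up, define blocks as maximal runs of consecutive match columns, so that insertion columns also cut blocks; the count $2C+1$ is unchanged.

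\textbf{The converse direction.} Yours is more accurate than the paper's. Your bound $|x|+|y|+|x'|+|y'|+2|u|$ is valid. The paper's displayed inequality $d(x(pq)^ky,x'(qp)^ky')\le d(xpy,x'py')$ already fails for $x=x'=y=y'=\varepsilon$, $p=a$, $q=b$, $k=1$.

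\textbf{One small slip.} $\dlev\le d_{lcs}$ holds simply because insertions and deletions are Levenshtein operations. Your stated reason, ``a substitution is a deletion plus an insertion'', gives the reverse estimate $d_{lcs}\le 2\dlev$.
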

\begin{proof}
   The proof for ($\Rightarrow$) is an adaptation of the proof of  Proposition~2 from \cite{approximate}. Since $d(xu^ky,x'v^{k}y') \leq C$, we get $|u| = |v|$. Otherwise, as $k \in \mathbb{I}$ increases, the length difference of $xu^{k}y$ and $x'v^{k}y'$ increases, and hence their distance will not be bounded. 
    Since $|u| = |v|$, either both $u$ and $v$ are nonempty, or $u=v= \epsilon$. In the latter case, $u$ and $v$ are conjugate. Assume that $u$ and $v$ are nonempty words. Take $k \in \mathbb{I}$ s.t.~$k \geq 2^{|u| + |v|}$. Since $d(xu^ky,x'v^{k}y') \leq C$, there exist large portions of $u$’s and $v$’s that match. In fact, $u$'s and $v$'s overlap at least of length $|u| + |v|$. By Fine and Wilf's\footnote{The Fine and Wilf's theorem states that if some powers of two words $u$ and $v$ share a common factor of length $|u| + |v| - \gcd(u,v)$ then their primitive roots are conjugate.} theorem, the primitive roots\footnote{The primitive root of a word $w$, denoted by $\rho_w$, is the shortest word s.t.~$w=(\rho_w)^n$ for some $n\in\N$.} of $u$ and $v$ are conjugate. Since $|u|=|v|$, it follows that $u$ and $v$ themselves are conjugate.

    For the other direction ($\Leftarrow$), assume $u$ and $v$ are conjugate, i.e., there exist words $p,q$ s.t.~$u=pq$ and $v=qp$. Hence, for every $k\geq 1$,  
    $d(xu^ky,x'v^ky') = d(x(pq)^ky,x'(qp)^ky') \leq d(xpy,x'py')$.
    This is finite for $d \in \{\dlev, d_{lcs}, d_{dl}\}$. 
\end{proof}

\begin{proposition}[\cite{editdistance}]\label{prop:levboundconj}
  Let $\calT$ and $\calS$ be functional transducers. For $d \in \{\dlev, d_{lcs}, d_{dl}\}$, $d(\calT,\calS) < \infty$ iff $\dom(\calT) = \dom(\calS)$ and every pair of output words generated by loops in the (trim) Cartesian product of $\calT$ and $\calS$ is conjugate. 
\end{proposition}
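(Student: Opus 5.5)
We prove the two directions separately. Throughout, $\calT$ and $\calS$ are taken real-time and trim, and $\calP$ denotes the trim part of the Cartesian product $\calT\times\calS$. Since both transducers are functional, we may also take them unambiguous, so each $u\in\dom(\calT)\cap\dom(\calS)$ has exactly one accepting run of $\calP$. A loop of $\calP$ is a run $p\xrightarrow{w\mid (\alpha,\beta)}p$, where $\alpha$ and $\beta$ are the outputs produced by the $\calT$- and $\calS$-components.

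For ($\Rightarrow$), suppose $d(\calT,\calS)=C<\infty$. Equality of domains is immediate from \Cref{def:distrel}. Take any loop $p\xrightarrow{w\mid(\alpha,\beta)}p$ of $\calP$. Because $\calP$ is trim, there are runs $I\ni q_0\xrightarrow{u_1\mid(x,x')}p$ and $p\xrightarrow{u_2\mid(y,y')}q_f\in F$, where $y$ and $y'$ include the final outputs. For every $k$, the word $u_1w^ku_2$ lies in both domains, and functionality gives $\calT(u_1w^ku_2)=x\alpha^ky$ and $\calS(u_1w^ku_2)=x'\beta^ky'$. Hence $d(x\alpha^ky,x'\beta^ky')\le C$ for all $k$, and \Cref{prop:conj} yields $\alpha\sim\beta$.

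For ($\Leftarrow$), assume the domains are equal and every loop output pair is conjugate. We must bound $d(\calT(u),\calS(u))$ uniformly in $u$. Let $\rho$ be the accepting run of $\calP$ on $u$. Repeatedly extract simple cycles (state repetitions) from $\rho$. This gives a decomposition of $\rho$ into an acyclic skeleton of length at most $|Q_\calP|$ together with loops inserted at skeleton positions. Applying the triangle inequality piece by piece fails, because the number of loops is unbounded. The invariant to carry instead is the \emph{delay} between the two output prefixes.

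Concretely, we use that all loop output pairs of $\calP$ are conjugate to show that there is a bound $K$, depending only on $\calP$, such that at every point of $\rho$ the two produced prefixes $v_1,v_2$ satisfy $d(v_1,v_2)\le K$ up to a bounded lag. The plan is as follows. First, deal with simple loops: conjugacy $\alpha=st$, $\beta=ts$ gives $x\alpha^k\sim$-alignment via $x s (ts)^k\ldots$, i.e.\ a shift by $|s|\le|\alpha|$, bounded by the maximal output length times $|Q_\calP|$. Second, deal with interleaved loops at the same state: two loops rooted at $p$ with pairs $(\alpha_1,\beta_1)$ and $(\alpha_2,\beta_2)$ also compose into loops $\alpha_1\alpha_2$ and so on. By the hypothesis these composites are conjugate to the corresponding $\beta$-composites, which forces a common shift. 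The key lemma is that for loops at a fixed state $p$ there is a single pair $(s_p)$ realizing $\beta=s_p^{-1}$-conjugation consistently, so that the $\calS$-output equals the $\calT$-output shifted by a bounded word. This is the analogue of the standard ``bounded delay'' argument for functionality, and we follow the proof from \cite{editdistance}, citing it for this lemma.

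The main obstacle is this consistency lemma: pairwise conjugacy of individual loops must yield a uniform shift across arbitrary nestings of loops. We would prove it using Fine and Wilf on $\alpha_1\alpha_2$ versus $\beta_1\beta_2$, together with lengths $|\alpha_i|=|\beta_i|$. Once the uniform shift is established, the edit distance is bounded by $O(|Q_\calP|\cdot M)$, where $M$ is the maximal output length of a transition or final output.
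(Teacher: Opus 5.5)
The paper does not prove this proposition; it imports it from \cite{editdistance}, so your proposal has to stand on its own. Your ($\Rightarrow$) direction does. Trimness puts $u_1w^ku_2$ in both domains, functionality fixes the two outputs as $x\alpha^ky$ and $x'\beta^ky'$, and \Cref{prop:conj} gives $\alpha\sim\beta$.

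The ($\Leftarrow$) direction is not proved. It rests entirely on your consistency lemma, and you defer that lemma to \cite{editdistance}. This is circular, since that lemma is the real content of the proposition you are proving. Read literally, the lemma is also false: it asks for a word $s_p$ with $\alpha s_p=s_p\beta$ for every loop $(\alpha,\beta)$ at $p$. Take two self-loops at $p$ with outputs $(ba,ab)$ and $(ca,ac)$. They generate only conjugate pairs, since every composite has the form $(x_1a\cdots x_na,\,ax_1a\cdots ax_n)$. Yet every right witness of the first loop begins with $b$ and every right witness of the second begins with $c$. The common witness $a$ works only on the other side, as $a\alpha=\beta a$. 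So the lemma must allow the witness on either side, or equivalently take a conjugator in the free group.

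The route you sketch for the lemma also fails as stated. Conjugacy of $\alpha_1\alpha_2$ and $\beta_1\beta_2$, together with $\alpha_i\sim\beta_i$ and $|\alpha_i|=|\beta_i|$, does not force a common witness. For $(\alpha_1,\beta_1)=(ab,ba)$ and $(\alpha_2,\beta_2)=(ba,ab)$ we have $abba\sim baab$, and all composites of length two are conjugate. However, right witnesses of the two pairs begin with $a$ and $b$ respectively, and left witnesses end with $b$ and $a$ respectively. The hypothesis is violated only by longer composites, e.g.\ $abbaab\not\sim baabba$.

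A complete argument therefore needs three further ingredients:
\begin{enumerate}
\item longer composites, e.g.\ $\alpha_1^k\alpha_2$ for large $k$, where Fine and Wilf can compare long periodic stretches;
\item a separate treatment of the case where all loop outputs at $p$ are powers of one primitive word (then all $\beta$'s are powers of one fixed rotation of it);
\item a passage from finitely many loops to all loops at $p$. Two loops with non-commuting outputs confine the common witnesses on each side to finitely many short words, so a finite-intersection argument works.
\end{enumerate}

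Once such a $z_p$ exists for each $p$, your ``delay'' invariant is not needed. Split the accepting run at last occurrences of states. This gives at most $|Q_{\calP}|$ skeleton states $p_i$, each carrying one (possibly composite or empty) loop with $d(\alpha_i,\beta_i)\le 2|z_{p_i}|$. Subadditivity of $d$ under concatenation then yields a uniform bound. That bound depends on the lengths $|z_p|$, so your claimed $O(|Q_{\calP}|\cdot M)$ is not supported by the argument.
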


\section{Computing distance of finite-valued transducers}
\label{sec:distFiniteValued}
In this section, we address the problem of computing the distance between two finite-valued transducers. 
Our approach proceeds in two main steps. First, we show how to convert finite-valued transducers 
into multi-sequential transducers in such a way that the distance between the original transducers 
is preserved. Second, we show that the distance between multi-sequential transducers can be computed 
via a notion of relative distance, whose computability will be proved in the next section.
\subsection{Edit distance: finite-valued to multi-sequential}
We begin by reducing the problem of computing the edit distance between two finite-valued 
transducers to the analogous problem for multi-sequential transducers. 
Throughout, we assume that each finite-valued transducer is given as a disjoint union of finitely many unambiguous transducers. 
Given two such transducers, we first check if their domains are equal by testing the equivalence of their underlying nondeterministic automata, which is known to be $\CF{PSPACE}$-complete~\cite{stock73}. If their domains are not equal, then their distance is infinite. 
\begin{lemma}\label{finitetomulti}
    Let $\calT$ and $\calS$ be two finite-valued transducers with $dom(\calT)= dom(\calS)$. There exist multi-sequential transducers $\calT'$ and $\calS'$, effectively constructible from $\calT$ and $\calS$, s.t.~$\dom(\calT') = \dom(\calS')$ and $d(\calT, \calS) = d(\calT', \calS')$ for any metric $d$ given in \Cref{table:editdistance}.
\end{lemma}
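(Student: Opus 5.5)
We plan to turn the inputs into sequential ones with a look-ahead encoding: both transducers are made to read input words annotated with enough information that each unambiguous component becomes deterministic. Concretely, let $\calT = \bigcup_{i} \calT_i$ and $\calS = \bigcup_j \calS_j$ be the given disjoint unions of unambiguous real-time transducers. Let $\calA$ be the Cartesian product of the underlying automata of all the $\calT_i$ and $\calS_j$, and let $\calD$ be the deterministic automaton obtained from the reversal of $\calA$ by the subset construction. Reading a word from right to left, $\calD$ computes, for every position, the set of tuples of states from which the remaining suffix can be accepted. The new input alphabet is $A' = A \times P$, where $P$ is the state set of $\calD$. We call a word over $A'$ \emph{well-annotated} if its second components form the run of $\calD$ on the reversed first components. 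Since this run is unique, the set $L$ of well-annotated words is regular, and the projection $\pi : L \to A^*$ is a bijection.

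Next we define $\calT_i'$ on $A'$. On reading $(a,p)$ in state $q$, $\calT_i'$ moves to the unique successor $q'$ with $q \xrightarrow{a} q'$ from which the suffix is accepted; this acceptability information is exactly what the annotation $p$ provides. Unambiguity of $\calT_i$ guarantees that at most one such $q'$ exists. Otherwise two distinct successors, each completable to an accepting run, would give two accepting runs on the same input. The initial state is chosen in the same way, using the annotation of the first letter together with a special treatment of the empty word. We then intersect with a deterministic automaton for $L$ via a product, which keeps the transducer sequential. Doing the same for each $\calS_j$ gives sequential transducers $\calS_j'$. Set $\calT' = \bigcup_i \calT_i'$ and $\calS' = \bigcup_j \calS_j'$, taking state-disjoint unions, so both are multi-sequential.

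The key claim is that $\calT'(w') = \calT(\pi(w'))$ for $w' \in L$, and that $\calT'(w') = \emptyset$ otherwise; the same holds for $\calS'$. From this it follows that $\dom(\calT') = \pi^{-1}(\dom(\calT)) = \pi^{-1}(\dom(\calS)) = \dom(\calS')$. The Hausdorff distances agree pointwise, since $H_d(\calT'(w'),\calS'(w')) = H_d(\calT(\pi(w')),\calS(\pi(w')))$. Because $\pi$ is a bijection onto $A^*$, the suprema coincide, so $d(\calT,\calS) = d(\calT',\calS')$. The metric $d$ plays no role here, so the argument works for every distance in \Cref{table:editdistance}.

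We expect the main obstacle to be verifying that the look-ahead determinisation is faithful. The two inclusions to check are that every accepting run of $\calT_i$ on $\pi(w')$ is reproduced by $\calT_i'$ on $w'$, and that every accepting run of $\calT_i'$ projects back to an accepting run of $\calT_i$. A further point is that the final output $\omicron$ must be correctly attached at the end. We also need the annotation to be defined by a single automaton $\calD$, shared by all components of $\calT$ and $\calS$. Otherwise the two multi-sequential transducers would read different alphabets or different annotated words, and the pointwise equality of outputs would fail.
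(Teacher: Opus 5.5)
Your argument is correct in substance, but it takes a genuinely different route from the paper.

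\textbf{Comparison with the paper.} The paper does not use look-ahead. It replaces the input alphabet by tuples of transitions $\Delta_1\times\cdots\times\Delta_m\times\Gamma_1\times\cdots\times\Gamma_n$, so determinism comes for free. Next to the guessed runs it runs a forward subset construction. The acceptance condition (``if $M_i\cap F_i\neq\emptyset$ then $q_i\in F_i$'') then forces every component that accepts the underlying word $w$ to have guessed its unique accepting run. This has two consequences:
\begin{itemize}
\item The correspondence between new and old inputs is only many-to-one. Rejecting components may guess arbitrary runs, which is why the components must first be made complete.
\item A rejecting component still has a non-accepting run whose output must be discarded. So $\calT'_i$ should accept only when its own coordinate $q_i$ is final; the paper gives all $\calT'_i$ the common final set $F$, which needs this extra condition.
\end{itemize}
Your construction is the classical regular look-ahead (Elgot--Mezei, bimachine style) determinisation of unambiguous transducers. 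Because the annotation is canonical, $\pi$ is a bijection between well-annotated words and $A^*$. A component that rejects $\pi(w')$ simply blocks, and no completeness assumption is needed. The equality $\calT'(w')=\calT(\pi(w'))$ then gives the equality of distances directly. The price is a reversal plus determinisation, and an explicit regular check that the annotation is correct. Both constructions are exponential, and neither depends on the metric.

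\textbf{A detail to repair.} The annotation must record, for each component $c$ separately, the set $X_c$ of its states from which that component accepts the remaining suffix. So $\calD$ should be the product of the determinised reversals of the individual components, or equivalently the subset construction applied to the reversal of their disjoint union.

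If you literally determinise the reversal of the synchronous product $\calA$ with its usual acceptance (all coordinates final), you obtain $X_1\times\cdots\times X_N$ over all $N=m+n$ components. This set is empty as soon as one component cannot accept the suffix. For example, suppose a trim $\calT_2$ accepts only words ending in $b$, and $aaa\in\dom(\calT_1)$. Then the annotation of $aaa$ is $\emptyset$ at some position, $\calT'_1$ blocks there, and $\calT_1(aaa)$ is lost.

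With per-component sets, the rest of your proof goes through as you describe. This includes your uniqueness argument: the current state is reached from the initial state on the prefix, so two co-accessible successors would give two accepting runs.
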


\begin{proof}

Assume that we have two finite-valued transducers $\calT \equiv \calT_1 \cup \cdots \cup \calT_m$ and $\calS \equiv \calS_1 \cup \cdots \calS_n$, each expressed as the union of $m$ and $n$ state-disjoint unambiguous transducers, respectively. Let $L$ denote the common domain of $\calT$ and $\calS$. For $i \in [m]$ and  $j \in [n]$ let $\calT_i = (Q_i, A, B, s_i,\Delta_i,F_i, \lambda_i, \omicron_i)$, and $\calS_j = (P_j, A, B, t_j,\Gamma_j,G_j,\mu_j,\eta_j)$ where $\calT_i$ and $\calS_j$ are unambiguous transducers that are complete.

 We define $m+n$ sequential transducers $T_i'$ and $S_j'$, for $i \in [m]$ and $ j \in [n]$, whose underlying automaton is a product automaton $\calA$ that captures the \emph{synchronous runs} of all components of $T$ and $S$. The output labels of each $T_i'$ and $S_j'$ are defined depending on the outputs of $T_i$ and $S_j$, respectively. Before giving the formal definition of the underlying product automaton $\calA$, we give an intuitive overview of its construction. The input alphabet of $\calA$ is over the tuples of transitions of all components of $\calT$ and $\calS$, i.e., $\Delta_1 \times \cdots \times \Delta_m \times \Gamma_1 \times \cdots \times \Gamma_n$, and hence making the automaton deterministic. The states of $\calA$
track, for each component, (i) the state reached by the current run, and
(ii) the set of states reachable on the corresponding input word.
Final states are defined so that an accepting run of 
$\calA$ corresponds to a tuple of synchronous runs of all components of $\calT$ and $\calS$ on a common input word in their domain, s.t.~(i) at least one component of 
$\calT$ and at least one component of 
$\calS$ is accepting, and (ii) for every component whose reachable state set contains a final state, the state reached by the current run of that component is itself final. This condition ensures that all accepting runs induced by an input word are realised simultaneously by a single global run in  $\calA$.

Formally, $\calA = (Q,A', s, \Delta,F)$
where
\begin{itemize}
    \item $Q = Q'_1 \times \cdots \times Q'_m \times P'_1 \times \cdots P'_n$ is the set of states of $\calA$ s.t.~for $i \in [m]$ and  $j \in [n]$, $Q'_i = Q_i \times 2^{Q_i}$  and $P'_j = P_j \times 2^{P_j}$. Each \((q,M)\in Q'_i\) (similarly $P'_j$) keeps track of two things: 
the state \(q\) reached in $T_i$ (resp. $S_j$) on a run, and the set \(M\) of all states reachable in $T_i$ (resp. $S_j$) on the input word read along this run. 
    
    \item $A' = \Delta_1 \times \cdots \times \Delta_m \times \Gamma_1 \times \cdots \times \Gamma_n$ is the input alphabet of $\calA$.

    \item $s = ((s_1, \{s_1\}), \cdots, (s_m,\{s_m\}),(t_1,\{t_1\}), \cdots, (t_n,\{t_n\}))$ is the initial state of $\calA$.

    \item $\Delta \subseteq Q \times A' \times Q$ is the set of transitions of $\calA$ defined as follows. \\
  Let $\mathbf{q}  = \big( (q_1, M_1), \dots, (q_m, M_m), (p_1, N_1), \dots, (p_n, N_n) \big)$, $
    {\sigma} = (\delta_1, \dots, \delta_m, \gamma_1, \dots, \gamma_n)$, and $
    \mathbf{q'} = \big( (q'_1, M'_1), \dots, (q'_m, M'_m), (p'_1, N'_1), \dots, (p'_n, N'_n) \big)$. The transition 
$(\mathbf{q}, {\sigma}, \mathbf{q'}) \in \Delta$ iff there exists a letter $a \in A$ s.t.~the following conditions hold:
\begin{itemize}
    \item For each $i \in [m]$, $\delta_i = (q_i,a,q'_i) \in \Delta_i$ is a transition in $\calT_i$, and the  set 
    $
    M'_i = \{ q' \mid q \in M_i, (q,a,q') \in \Delta_i \}
    $
    consists of all states in $\calT_i$ reachable on an $a$ from the states in $M_i$.
    
    \item For each $j \in [n]$, $\gamma_j = (p_j,a,p'_j) \in \Gamma_j$ is a transition in $\calS_j$, and the  set 
    $N'_j = \{ q' \mid q \in N_j, (q,a,q') \in \Gamma_j \}$
    consists of all states in $\calS_j$ reachable on an $a$ from states in $N_j$.
\end{itemize}

\item $F$ is the set of all final states of $\calA$ s.t.~$\big( (q_1, M_1), \dots, (q_m, M_m), (p_1, N_1), \dots, (p_n, N_n) \big) \in F$ iff the following holds:
\begin{itemize}
\item there exist an $i\in [m]$ and $j\in [n]$ s.t.~$q_i \in F_i$ and $p_j \in G_j$.
    \item for all $i\in [m]$, if $M_{i}\cap F_{i} \neq \emptyset$, then $q_{i}\in F_{i}$.
    \item for all $j\in [n]$, if $N_{j}\cap G_{j} \neq \emptyset$, then $p_{j}\in G_{j}$.
\end{itemize}

\end{itemize}

By construction, the language $L' \subseteq (\Delta_1 \times \cdots \times \Delta_m \times \Gamma_1 \times \cdots \times \Gamma_n)^* $ of $\calA$ consists of all sequences of tuples of the form
 $\rho=(\delta_1^{1},\cdots,\delta_m^{1},\gamma_1^{1}, \cdots, \gamma_n^{1}) \cdots (\delta_1^{k},\cdots,\delta_m^{k},\gamma_1^{k}, \cdots, \gamma_n^{k}),\text{ for } k\geq 1$  s.t.~for each $\ell \in [k]$, $i \in [m]$, and $j \in [n]$, we have $\delta_i^{\ell} \in \Delta_i$ and $\gamma_j^{\ell} \in \Gamma_j$. Moreover, there exists a word $w = a_1 \cdots a_k$ (denoted by $w_{\rho}$) s.t.
\begin{enumerate}
\item For every $\ell \in [k]$, $i \in [m]$, and $j \in [n]$, the transitions $\delta_i^{\ell}$ and $\chi_j^{\ell}$ are taken on the same input letter $a_\ell$.
\item For all $i \in [m]$, the sequence $\delta^1_i, \ldots \delta^k_i$, is a valid run on $w$ in $T_i$. Similarly, for all $j \in [n]$, the sequence $\gamma^1_j, \ldots \gamma^k_j$, is a valid run on $w$ in $S_j$.
\item There exists $i \in [m]$ and $j \in [n]$, s.t.~the sequence $\delta_i^{1} \cdots \delta_i^{k}$ forms an accepting run of $\calT$ on $w$ and $\gamma_j^{1} \cdots \gamma_j^{k}$ forms an accepting run of $\calS$ on $w$. 

\item For all $i \in [m]$, if $w \in \dom(\calT_i)$, then  $\delta_i^{1} \cdots \delta_i^{k}$ forms the unique accepting run of $\calT_i$ on $w$, denoted as $\rho_{\text{\tiny$\calT$}_i}$. Similarly, for all $j \in [n]$, if $w \in \dom(\calS_j)$, then  $\gamma_j^{1} \cdots \gamma_j^{k}$ forms the unique accepting run of $\calS_j$ on $w$, denoted as $\rho_{\text{\tiny$\calS$}_j}$.
\end{enumerate}

Using the product automaton defined above, we now define two multi-sequential transducers  $\calT' \equiv \calT'_1 \cup \cdots \cup \calT'_m$ and $\calS' \equiv \calS'_1 \cup \cdots \cup \calS'_n$ as follows: for $i \in [m],\ j\in [n]$, let $\calT'_i = (Q,A',B, s, \Delta, F, \lambda'_i, \omicron'_i)$  and  $\calS'_j = (Q, A', B, s, \Delta,F,\mu'_j,\eta'_j)$ where  
\begin{itemize}
    \item $(Q,A',B, s, \Delta, F) = \calA$.
     \item For $i\in[m], j\in [n]$, $\lambda'_i: \Delta \rightarrow B^*$ and $\mu'_j: \Delta \rightarrow B^*$ are the output transition labelling functions of  of $\calT'_i$ and $\calS'_j$ respectively and are defined as follows.  If $\psi = (\mathbf{q},(\delta_1,\cdots,\delta_m,\gamma_1,\cdots,\gamma_n),\mathbf{q'}) \in \Delta$, then $\lambda'_i(\psi) = \lambda_i(\delta_i)$ and $\mu'_j(\psi) = \mu_j(\gamma_j)$.
    
    \item For all $i \in [m],j \in [n]$, $\omicron'_i:Q \rightarrow B^*$ and $\eta'_j:Q \rightarrow B^*$ are the output state labelling functions of $\calT'_i$ and $\calS'_j$ respectively and are defined as follows: \\
    \[\omicron'_i((q_1,M_1),\cdots,(q_m,M_m),(p_1,N_1),\cdots,(p_n,N_n)) = \omicron_i(q_i) \text{ and} \]
     \[\eta'_j((q_1,M_1),\cdots,(q_m,M_m),(p_1,N_1),\cdots,(p_n,N_n)) = \eta_j(q_j).\]
\end{itemize}

Each $\calT'_i$ (resp. $\calS'_j$) simply projects the $i$-th output of $\calT_i$ (resp. $j$-th output of $\calS_j$) along the synchronous run. Since both $\calT'$ and $\calS'$ have the same underlying automaton $\calA$, $\dom(\calT') = \dom(\calS') =\dom(\calA)$. It remains to show that $d(\calT,\calS) = d(\calT',\calS')$. Observe that there is a correspondence between $L$ and $L'$. For each $w \in L$, there is a $\rho \in L'$ (with $w_{\rho} = w$)  s.t.~$\calT(w) = \calT'(\rho)$ and $\calS(w) = \calS'(\rho)$. This follows since, by construction of final state of $\calA$, for all $i \in [m]$, if $w \in \dom(\calT_i)$ then $\rho_{\text{\tiny$\calT$}_i}$ is an accepting run of $\calT_i$ on $w$, and likewise, for all $j \in [n]$, if $w \in \dom(\calS_j)$ then $\rho_{\text{\tiny$\calS$}_j}$ is an accepting run of $\calS_j$ on $w$. Note that $\rho$ is not necessarily unique (since there could be some
non-determinism in a rejecting component leading to different sequences of transitions), however, the set of outputs remains the same for
$\rho$ and $w$. 
Similarly, for each $\rho \in L'$, there exists a unique $w \in L$ s.t.~$w=w_{\rho}$, and  $\calT(w) = \calT'(\rho)$ and $\calS(w) = \calS'(\rho)$. 
Thus, $\Big\{d\bigl(\calT(w),\calS(w)\bigr) \Big| w \in L\Big\} = \Big\{d\bigl(\calT'(\rho),\calS'(\rho)\bigr) \Big| \rho \in L'\Big\}$.
Hence, we conclude that  $d(\calT,\calS) = d(\calT',\calS')$.  \qedhere 
\end{proof}

\subsection{Edit distance of multi-sequential transducers}
In this subsection, we present the computability of the edit distance for multi-sequential relations, 
which form a subclass of finite-valued relations. This result will be used to address the computability of the edit distance for general finite-valued relations by virtue of \Cref{finitetomulti}. 

Our approach reduces the problem of computing the edit distance between multi-sequential 
relations (or multi-sequential transducers) to computing the \emph{relative distance} (see \Cref{def:relative}) between a 
sequential function and a multi-sequential relation. 
\begin{definition}[Relative distance]\label{def:relative}
    Let $d$ be a metric over words. The relative distance from a function $f$ to a relation $R$ w.r.t.~$d$, denoted by $\overrightarrow{d}(f,R)$, is defined as the least upper bound of the minimal distance between the output of $f$ and some output of $R$ on each input.
 $$\overrightarrow{d}(f,R) = \begin{cases} \sup \Big \{\inf \left \{d(f(u),v) \mid v \in R(u) \right \}  \Big | u \in \dom(f) \Big \} & \text{if $\dom(f)\subseteq \dom(R)$} \\
 \infty & \text{otherwise} 
 \end{cases}$$\end{definition}
The notion of relative distance is analogous to that of the directed distance defined for languages.
The following lemma shows how to compute the edit distance between two multi-sequential relations using the notion of relative distance. We assume that the multi-sequential relation is given by a finite union of sequential functions.
 \begin{lemma}\label{prop:multusingreldis}
     Let $R \equiv f_1 \cup \cdots \cup f_m$ and $S \equiv g_1 \cup \cdots \cup g_n $ be two multi-sequential relations where $f_i$ and $g_j$ are sequential functions for all $i \in [m]$ and $ j \in [n]$. Then, for any integer-valued metric $d$, 
      \[d(R,S) = \begin{cases} \max \Bigg( \ \left \{\overrightarrow{d}(f_i,S) \mid i \in [m] \right \}  \cup \left \{\overrightarrow{d}(g_j,R) \mid j \in [n] \right \} \ \Bigg) & \text{if $\dom(R)= \dom(S)$} \\
 \infty & \text{otherwise} 
 \end{cases}\]
 \end{lemma}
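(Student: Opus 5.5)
The plan is to unfold both sides into suprema over inputs and exchange the order of the suprema and the finite maxima; no transducer structure is needed, only the finiteness of the unions. The case $\dom(R)\neq\dom(S)$ holds by Definition~\ref{def:distrel}, so I assume $\dom(R)=\dom(S)=L$. Since $R(u)=\{f_i(u)\mid i\in[m],\ u\in\dom(f_i)\}$, every $\dom(f_i)\subseteq L=\dom(S)$. Hence no $\overrightarrow{d}(f_i,S)$ falls into the ``otherwise'' branch of Definition~\ref{def:relative}, and symmetrically for every $\overrightarrow{d}(g_j,R)$.

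Next, fix $u\in L$. The sets $R(u)$ and $S(u)$ are finite and nonempty, so every infimum in the directed distance is a minimum. The directed distance $\overset{\rightsquigarrow}{d}(R(u),S(u))$ is the maximum over the indices $i$ with $u\in\dom(f_i)$ of $\min_{v\in S(u)} d(f_i(u),v)$. The same holds with the roles of $R$ and $S$ swapped. Therefore
\[
H_d\bigl(R(u),S(u)\bigr)=\max\Bigl(\bigl\{\min_{v\in S(u)} d(f_i(u),v) \bigm| u\in\dom(f_i)\bigr\}\cup\bigl\{\min_{v\in R(u)} d(g_j(u),v) \bigm| u\in\dom(g_j)\bigr\}\Bigr).
\]

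Finally, I take the supremum over $u\in L$. Since $\dom(f_i)\subseteq L$, the supremum over $u\in L$ of the terms indexed by $i$, restricted to $u\in\dom(f_i)$, is exactly $\overrightarrow{d}(f_i,S)$. The same holds for each $g_j$ and $\overrightarrow{d}(g_j,R)$. The supremum of a pointwise maximum of finitely many families equals the maximum of their suprema, working in $\mathbb{N}\cup\{\infty\}$. This gives $d(R,S)=\max(\ldots)$ as claimed.

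The only delicate points are the following bookkeeping issues:
\begin{itemize}
\item An index $i$ for which $u\notin\dom(f_i)$ contributes nothing at $u$, which is handled by restricting each family to $\dom(f_i)$.
\item A component with empty domain has $\overrightarrow{d}=\sup\emptyset=0$, which is harmless when $L\neq\emptyset$; when $L=\emptyset$ both sides are $0$ under the convention $\sup\emptyset=0$.
\item Integrality of $d$ is used only to guarantee that the suprema are attained or infinite, so that the max/sup exchange is clean. Finiteness of the unions alone already suffices for that exchange.
\end{itemize}

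I expect no real obstacle here. The step that needs care is keeping the domain side conditions straight when exchanging $\sup$ and $\max$.
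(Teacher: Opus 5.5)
Your proof is correct, and it takes a somewhat different route from the paper's.

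The paper argues by cases on the value of $d(R,S)$.
\begin{itemize}
\item If $d(R,S)=\infty$, it supposes every $\overrightarrow{d}(f_i,S)$ and $\overrightarrow{d}(g_j,R)$ is at most some $k$. Finiteness forces $\dom(f_i)\subseteq\dom(S)$ and $\dom(g_j)\subseteq\dom(R)$, hence $\dom(R)=\dom(S)$. It then concludes $d(R,S)\le k$, a contradiction.
\item If $d(R,S)=k$, it first bounds all relative distances by $k$. It then uses integrality to pick an input $u$ with $H_d(R(u),S(u))=k$ exactly, so some relative distance must equal $k$.
\end{itemize}

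You instead write $H_d(R(u),S(u))$ pointwise as a finite maximum of per-component minima and exchange $\sup_u$ with that finite maximum. This is shorter, and it shows the integrality hypothesis is superfluous. The identity $\sup_u\max_\kappa a_\kappa(u)=\max_\kappa\sup_u a_\kappa(u)$ holds in $[0,\infty]$ for any finite index set, with no attainment needed. So your argument works verbatim for real-valued metrics. Your closing remark that integrality is ``used'' to guarantee attainment is unnecessary, since you never invoke it.

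The paper's case split buys something you do not state. Its first case shows the maximum is $\infty$ whenever $d(R,S)=\infty$, even when the domains differ, so the ``otherwise'' branch is subsumed by the max formula. In your setting this is equally immediate: a $u\in\dom(R)\setminus\dom(S)$ lies in some $\dom(f_i)$, forcing $\overrightarrow{d}(f_i,S)=\infty$. Handling that branch directly from Definition~\ref{def:distrel}, as you do, is perfectly valid.
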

 \begin{proof}
      There are two cases to consider.
      \begin{description}
          \item[Case 1: $d(R,S) = \infty$.] Assume for contradiction that \[\max \Big(\left \{\overrightarrow{d}(f_i,S) \mid i \in [m] \right \}  \cup \left \{\overrightarrow{d}(g_j,R) \mid j \in [n] \right \} \Big) \leq k \text{ for some $k \geq 0$}.\] Thus, $\dom(R) = \dom(S)$. Moreover, by definition of relative distance, we can deduce that for any input $u$ and for any output $v \in R(u)$, there exists an output $v' \in S(u)$ s.t.~$d(v,v') \leq k$, and vice-versa. By \Cref{def:distrel}, we get that $d(R,S) \leq k$, which is a contradiction.

          \item[Case 2: $d(R,S) = k$ for some $k\geq 0$.] For any input $u$ and any output $v \in R(u)$, there exists an output $v' \in S(u)$ s.t.~$d(v,v') \leq k$, and vice-versa. Hence, for all $i \in [m]$, $\overrightarrow{d}(f_i,S) \leq k$ and for all $j \in [n]$, $\overrightarrow{d}(g_j,R) \leq k$. Since $d(R,S) = k$, there exist an input $u$, s.t.~$H_d(R(u),S(u)) = k$. This implies that there exists an $i \in [m], j \in [n]$ s.t.~either $\overrightarrow{d}(f_i,S) =k$ or $\overrightarrow{d}(g_j,R)=k$. Hence, $\max \Big( \left \{\overrightarrow{d}(f_i,S) \mid i \in [m] \right \}  \cup \left \{\overrightarrow{d}(g_j,R) \mid j \in [n] \right \} \Big) =k= d(R,S)$. 
          
          This completes the proof of the lemma. \qedhere
      \end{description}
       \end{proof}
We introduced the notion of \emph{relative distance} since the distance $d(R,S)$ cannot be expressed in terms of $d(f_i,S)$ and $d(g_j,R)$ (instead of $\overrightarrow{d}(f_i,S)$ and $\overrightarrow{d}(g_j,R)$). For instance, consider $R \equiv f\cup g$, where $f,g : A^* \rightarrow B^*$ are sequential functions defined as: $f(w)=a^{|w|}$, $g(w)=b^{|w|}$ for all $w \in A^*$. Since $d$ is a metric, $d(R,R)=0$. However, both $d(f,R)$ and $d(g,R)$ are infinite. Note that the distance between a function $f$ and a relation $R$ is finite iff, for every input $u$, the distance between $f(u)$ and \emph{all} outputs in $R(u)$ are bounded.

The next section is dedicated to showing that the relative distance between a sequential function and a multi-sequential relation with respect to metrics $d$ given in \Cref{table:editdistance} is computable (see \Cref{th:relativeDist}). As a consequence, we obtain the following result using  \Cref{finitetomulti} and \Cref{prop:multusingreldis}.

\begin{theorem}
\label{th:finitevalued} 
The distance between two finite-valued rational relations with respect to any metric given in \Cref{table:editdistance} is computable.
\end{theorem}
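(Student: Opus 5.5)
The plan is to chain the three reductions already set up, treating the computability of relative distance (\Cref{th:relativeDist}, proved in the next section) as a black box. Suppose we are given two finite-valued transducers $\calT$ and $\calS$ and a metric $d \in \{d_{lcs}, \dlev, d_{dl}\}$.

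First, using the effective decomposition of \cite{Weber93}, we rewrite each transducer as a finite union of state-disjoint unambiguous transducers, and we make each component complete by adding a non-final sink state. This changes neither the relation nor unambiguity, since the added runs are rejecting. Next we decide whether $\dom(\calT) = \dom(\calS)$ by testing equivalence of the underlying automata, which is in $\CF{PSPACE}$. If the domains differ, we output $\infty$.

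Otherwise, \Cref{finitetomulti} gives multi-sequential transducers $\calT' \equiv \calT'_1 \cup \cdots \cup \calT'_m$ and $\calS' \equiv \calS'_1 \cup \cdots \cup \calS'_n$, all sharing the deterministic underlying automaton $\calA$, with $\dom(\calT') = \dom(\calS')$ and $d(\calT,\calS) = d(\calT',\calS')$. Let $f_i$ and $g_j$ be the sequential functions computed by $\calT'_i$ and $\calS'_j$.

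One point needs care here. The components $\calT'_i$ are defined on all of $\dom(\calA)$, but on an input $\rho$ whose induced run in $\calT_i$ is not accepting, the output of $\calT'_i$ is spurious and must not be counted in $\calT'(\rho)$. I would therefore restrict each $\calT'_i$ to the inputs $\rho$ whose $i$-th coordinate ends in a final state of $\calT_i$, and similarly for each $\calS'_j$. This is a regular condition on the last state of $\calA$, so each restricted component is still a sequential transducer, now with its own final states. The union of the restricted components recognises exactly the relation intended in the proof of \Cref{finitetomulti}. Its domain equals $\dom(\calA)$, because condition (3) on final states of $\calA$ guarantees that at least one $\calT$-component and at least one $\calS$-component accept.

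We then apply \Cref{prop:multusingreldis}, which is valid because every edit distance in \Cref{table:editdistance} is integer-valued and the domains coincide. This gives
\[ d(\calT,\calS) = \max\Bigl(\{\overrightarrow{d}(f_i,\calS') \mid i \in [m]\} \cup \{\overrightarrow{d}(g_j,\calT') \mid j \in [n]\}\Bigr). \]
Each term is the relative distance from a sequential function to a multi-sequential relation, which is computable by \Cref{th:relativeDist}, including the case where its value is $\infty$. Computing these finitely many values and taking their maximum yields $d(\calT,\calS)$.

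All of the real difficulty lies in \Cref{th:relativeDist}, which I assume here. Within this argument, the only delicate step is making sure the multi-sequential decomposition taken from \Cref{finitetomulti} is faithful, that is, that rejecting components contribute no outputs. This is exactly what the restriction of each component to its own final states handles. Everything else is bookkeeping.
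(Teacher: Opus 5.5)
Your proposal is correct and follows the paper's route: reduce to multi-sequential transducers via \Cref{finitetomulti}, write the distance as a maximum of relative distances via \Cref{prop:multusingreldis}, and compute each term by \Cref{th:relativeDist}. Your extra step of restricting each $\calT'_i$ (resp.\ $\calS'_j$) to those final states of $\calA$ whose $i$-th (resp.\ $j$-th) coordinate is final is a correct repair of a real glitch: in the paper's construction all components share the final-state set $F$ of $\calA$, so they would also emit outputs along rejecting runs of $\calT_i$, and this breaks the claimed identity $\calT(w)=\calT'(\rho)$.
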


\section{Computing relative distance}
\label{sec:relativeDistance}
We show that the problem of computing the relative distance of a function $f$ to a relation $R$ can be reduced to two boundedness problems: (1) deciding whether it is finite \textit{(finiteness)} and, (2) if finite, whether it is bounded by a given constant $k$ \textit{($k$-finiteness)}. This formulation, along with its proof, closely follows Proposition 3.6 in \cite{editdistance}.
 \begin{proposition}\label{prop:disttoclose}
    Let $d$ be an integer-valued metric, $f$ a function, and $R$ a relation. The relative distance $\overrightarrow{d}(f,R)$ is computable iff it is decidable whether\\ 
    {\upshape\textbf{(1)}} $\overrightarrow{d}(f,R) < \infty$, \qquad and \qquad {\upshape\textbf{(2)}}  $\overrightarrow{d}(f,R) \leq k$ for some $k\geq 0$.
        
\end{proposition}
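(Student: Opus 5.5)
Both implications are elementary. The only facts needed are that an integer-valued metric makes every inner infimum in \Cref{def:relative} an element of $\N\cup\{\infty\}$, and that a supremum of such values is again in $\N \cup \{\infty\}$.

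\emph{($\Rightarrow$)} Suppose $\overrightarrow{d}(f,R)$ is computable, i.e.\ there is an algorithm returning either $\infty$ or the natural number it equals. Problem (1) is answered by checking whether the output is different from $\infty$. Problem (2), for a given $k$, is answered by checking whether the output is a number at most $k$.

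\emph{($\Leftarrow$)} Suppose problems (1) and (2) are decidable. The algorithm is:
\begin{enumerate}
\item Run the decision procedure for (1). If $\overrightarrow{d}(f,R)=\infty$, output $\infty$.
\item Otherwise $\overrightarrow{d}(f,R)=\sup S$ for a set $S\subseteq\N$ of infima, and this supremum is finite. Since $S$ is a bounded set of naturals, the supremum is attained and is a natural number $k^*$.
\item Query (2) for $k=0,1,2,\dots$ in increasing order and output the first $k$ for which the answer is yes.
\end{enumerate}
This first $k$ equals $k^*$, because $k^*\le k$ holds exactly for $k\ge k^*$. Termination is guaranteed because step 1 has already established finiteness.

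The only point that needs care is the edge case where the domain condition fails. If $\dom(f)\not\subseteq\dom(R)$, the definition sets $\overrightarrow{d}(f,R)=\infty$, and the finiteness oracle reports this in step 1, so the algorithm handles it correctly.

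Within the inner infima, $R(u)$ may be infinite, but each infimum is still attained because its values lie in $\N$. If $\dom(f)=\emptyset$, the supremum is $0$ by convention, and the search in step 3 returns $0$ immediately. The reason integrality matters is that it guarantees the linear search lands exactly on the value. For a real-valued metric, one could only approximate the value with this method.
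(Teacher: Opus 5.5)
Your proof is correct and follows the same approach as the paper: decide finiteness first, then use the $k$-finiteness oracle to find the exact integer value. The only difference is the search strategy: you use a linear search $k=0,1,2,\dots$, while the paper uses an exponential search followed by a binary search on $[2^n,2^{n+1}]$, which needs only logarithmically many oracle calls but is otherwise equivalent for computability.
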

\begin{proof}
     Clearly, if the relative distance $\overrightarrow{d}(f,R)$ with respect to $d$ is computable, then we can decide whether $\overrightarrow{d}(f,R) < \infty$ and whether $\overrightarrow{d}(f,R) \leq k$ for some $k \geq 0$. 
     For the converse, assume decidability of the two boundedness problems. Given a function $f$ and a relation $R$, we first check whether  $\overrightarrow{d}(f,R) < \infty$. If it is not, then $\overrightarrow{d}(f,R) = \infty$. Otherwise, we perform an exponential search:  check whether  $\overrightarrow{d}(f,R) \leq k$ for $k=2^0, 2^1, 2^2,\ldots$ until the condition fails. Once an interval containing the relative distance is found, we perform a binary search on the interval $[2^n, 2^{n+1}]$, $n\in \N$ to determine the exact value of $\overrightarrow{d}(f,R)$. 
\end{proof}

We show in the upcoming subsections that both finiteness and $k$-finiteness of relative distance of a sequential function to a multi-sequential relation w.r.t.~Levenshtein family of distances are decidable (See Lemma~\ref{lem:compreldis} and \ref{lem:kclosemult}). Hence, using \Cref{prop:disttoclose}, we get
\begin{theorem}\label{th:relativeDist}
    Let $f$ be a sequential function, and $R$ be a multi-sequential relation. The relative distance of a $\overrightarrow{d}(f,R)$ is computable for $d \in \{\dlev, d_{lcs}, d_{dl}\}$.
\end{theorem}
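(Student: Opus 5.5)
By \Cref{prop:disttoclose}, it suffices to decide (1) whether $\overrightarrow{d}(f,R)<\infty$ and (2) whether $\overrightarrow{d}(f,R)\le k$ for a given $k$. Domain inclusion $\dom(f)\subseteq\dom(R)$ is a regular-language inclusion and is decidable, so we may assume it holds.

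\textbf{Common product.} Write $R\equiv g_1\cup\cdots\cup g_n$ with $g_j$ sequential. Let $\calP$ be the deterministic product of the transducers for $f,g_1,\dots,g_n$. Each state records the tuple of component states, where a component whose run has died is marked dead. For each component that is still alive, $\calP$ carries the output of $f$ and of each $g_j$ on every transition.

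\textbf{Finiteness (1).} The idea is a pumping characterisation generalising \Cref{prop:levboundconj}. On a fixed input, only the set of surviving, eventually accepting components matters. We will show that $\overrightarrow{d}(f,R)=\infty$ iff there is an input $u\in\dom(f)$ whose run in $\calP$ contains a loop structure witnessing unboundedness for every $j$ with $u\in\dom(g_j)$.

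\emph{Witness for one $j$.} Use the $\Rightarrow$ direction of \Cref{prop:conj}. The witness is a loop in $\calP$, reachable on a run that continues to acceptance for $f$ and $g_j$, on which the outputs of $f$ and $g_j$ are not conjugate.

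\emph{The difficulty.} The witnessing loops for different $j$ may sit at different places. Pumping one loop does not automatically keep the other $g_{j'}$ close. This is the genuine multi-sequential obstacle; for instance $f=a^{|w|}$ and $R=\{a^{|w|},b^{|w|}\}$ show that one bad pair alone does not suffice.

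\emph{Plan to handle it.*} I would use a Ramsey/idempotent-factorisation argument over the finite monoid of $\calP$-loops. Consider a long input whose distance to every live $g_j$ is large, and obtain a single iterable pattern $x\,y_1 z_1 y_2 \cdots y_n z$. Each $y_j$ is a loop at some state of $\calP$ where $f$ and $g_j$ produce non-conjugate outputs. Pumping all the $y_j$ simultaneously, with exponent $N$, forces the distance to every $g_j$ beyond any bound. Conversely, if no such pattern exists, then on every input some $g_j$ has all its loops conjugate to those of $f$. From this, a bound on the distance, polynomial in the size of $\calP$, follows as in the proof of \Cref{prop:levboundconj} via the $\Leftarrow$ direction of \Cref{prop:conj}.

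\emph{Making it decidable.} The existence of such a pattern is checked by guessing the sequence of $\calP$-states together with the subset of live and accepting components. Checking conjugacy of loop outputs is decidable (cf.\ \cite{decidingconjugacy}), and there are finitely many patterns of bounded shape once we restrict to simple cycles plus a bounded combination. Justifying this restriction to bounded shapes is the step I expect to be the main obstacle. It requires that non-conjugacy witnessed by a long loop is already witnessed by a short one, for instance a simple cycle, or a composition of two cycles, at the same state.

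\textbf{$k$-finiteness (2).} For fixed $k$, construct a finite automaton that reads $u$ and, for each $j$, tracks the "delay" between the output of $f$ and that of $g_j$. This delay is stored as a bounded edit-distance configuration: a pair of unmatched buffers, each of length at most $k$ plus the maximum output length per transition, together with the number of edits used so far, capped at $k+1$. This is the standard construction for the functional case in \cite{editdistance}.

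Run these trackers in parallel with $\calP$. A tuple of configurations is rejecting at the end if no accepting live $g_j$ has its edit count at most $k$. Then $\overrightarrow{d}(f,R)\le k$ iff the language of $\calP$-inputs in $\dom(f)$ leading to rejecting tuples is empty. This is decidable by reachability in a finite product.
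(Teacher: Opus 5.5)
Your overall plan matches the paper: reduce via \Cref{prop:disttoclose} to finiteness and $k$-finiteness, work in the deterministic product, and split by the set of accepting components. However, both of your decision procedures have gaps.

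\textbf{Finiteness.} The step that actually gives decidability is left open. You do not show how to decide whether the pattern $x\,y_1z_1\cdots y_nz$ exists, and the ``bounded shape'' lemma you would need is never proved. Two intermediate claims also fail as stated.
\begin{itemize}
\item Pumping all $y_j$ with the same exponent $N$ need not separate $f$ from $g_j$. Suppose $y_1$ and $y_2$ have $(f,g_j)$-outputs $(a,\varepsilon)$ and $(\varepsilon,a)$, both non-conjugate. Equal pumping can then give identical outputs $a^N$. You need independent exponents, plus an argument that no component stays $K$-close for all choices of them.
\item The statement ``on every input some $g_j$ has all its loops conjugate to those of $f$'' does not give a bound uniform in the input. \Cref{prop:levboundconj} needs conjugacy on every loop of a fixed trim subtransducer, not just on the loops of one run.
\end{itemize}
The missing idea is the paper's decomposition of the trim product, for a fixed accepting set $P$, into finitely many subtransducers $\calT_\pi$, each following one chain of maximal SCCs. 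All loops of $\calT_\pi$ are connected. By \Cref{claim:connected}, a single index $i\in P$ is therefore conjugate to $f$ on every loop of $\calT_\pi$. Conversely, such an index gives a bound via \Cref{prop:levboundconj}; this is \Cref{claim:distclaim}. Your pattern lies on one run, hence inside one $\calT_\pi$, so its existence is exactly the failure of that condition. Finiteness thus reduces to checking, for each $\pi$, whether $d(\calT_{\pi(0)},\calT_{\pi(i)})<\infty$ for some $i\in P$. That is an instance of the known computable distance between sequential transducers, and no bounded-shape lemma is needed.

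\textbf{$k$-finiteness.} Buffers of length at most $k$ plus the maximal output length per transition are too small, because a sequential transducer may emit its output late. Take $f$ given by $q_0\xrightarrow{c\mid a}q_1\xrightarrow{c\mid a}q_2\xrightarrow{c\mid \varepsilon}q_3$ with final output $\varepsilon$. Take $g$ given by $p_0\xrightarrow{c\mid \varepsilon}p_1\xrightarrow{c\mid \varepsilon}p_2\xrightarrow{c\mid a}p_3$ with final output $a$. Then $f(ccc)=g(ccc)=aa$, so with $R=g$ we have $\overrightarrow{d}(f,R)=0$. Yet after reading $cc$ the unmatched buffer is $aa$, which is longer than $k+1$ for $k=0$, so your automaton rejects $ccc$. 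The paper's automaton $\calA_{C,k}$ instead cuts buffers at $\max(\partial_{\max},k)$ with $\partial_{\max}=N\cdot\ell$, where $N$ is the number of states of the product. This delay bound is only available once finiteness has been decided. Your construction needs the same ordering and a bound that depends on the number of states.
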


\subsection{Deciding finiteness of relative distance}
We will now show that given a sequential function $f$ and a multi-sequential relation $R$, the problem of deciding whether $\overrightarrow{d}(f,R) < \infty$ is decidable. The first step is to check whether $\dom(f) \subseteq \dom(R)$; this reduces to the language inclusion problem of their underlying automata, which is decidable~\cite{stock73}. If $\dom(f) \not\subseteq \dom(R)$, then $\overrightarrow{d}(f,R) = \infty$.

\subparagraph*{Partitioning the domain of $f$.} Now, consider the case where $\dom(f) \subseteq \dom(R)$. Assume that the multi-sequential relation $R$ is given as the union of $m$ (complete) sequential transducers, $D_1, \ldots D_m$. We partition $\dom(f)$ into $2^{m} -1$ classes, where each class is labelled by a nonempty subset of $[m]$. We denote the set of classes in the partition by $\calP = \{ P \subseteq [m] \mid P \neq \emptyset\}$. For each $P \subseteq [m]$, the corresponding class $C_P$ contains exactly those words $w\in\dom(f)$ such that $w\in\dom(D_i)$ for every $i\in P$, and $w\notin\dom(D_j)$ for every $j\in [m]\setminus P$. 
The set $\{C_P\}_{P\in\mathcal P}$ forms a partition of $\dom(f)$, i.e., $\dom(f)=\biguplus_{P\in\mathcal P} C_P$. Let $f_{|P}$ denote the function $f$ restricted to $C_P$. It is straightforward to observe that 
\begin{equation}\label{eq: rel}
\overrightarrow{d}(f,R) = \max \{\overrightarrow{d}(f_{|P},R) \mid P \in \calP\}.
\end{equation}
For checking whether $\overrightarrow{d}(f,R) < \infty$, it now suffices to check whether $\overrightarrow{d}(f_{|P},R) < \infty$ for each $P \in \calP$. 
Finiteness of relative distance is solved using structural arguments based on conjugacy and SCC decompositions. Since 
$R$ is multi-sequential, different input words may belong to the domain of different component transducers $D_i$. Fixing a partition $P$ allows us to restrict the structural arguments (particularly Claim~\ref{claim:connected} and \ref{claim:distclaim}) to only those components in $P$ that accept the set of input words $C_P$ within the domain of $f$.

\subparagraph*{Checking whether $\overrightarrow{d}(f_{|P},R) < \infty$.} Fix a $P \in \calP$. We construct a \emph{product transducer} $\calT$, defined as the Cartesian product of the transducers given by $f$ and $R$. Let the sequential function $f$ be defined by a (complete) sequential transducer 
$
D_0 = (Q_0, A, B, \Delta_0, s_0, F_0, \lambda_0, \omicron_0),
$ 
and let the multi-sequential relation $R$ be given as the union of $m$ (complete) sequential transducers 
$
D_i = (Q_i, A, B, \Delta_i, s_i, F_i, \lambda_i, \omicron_i) \quad \text{for } i \in [m], \; m \in \mathbb{N}.
$
The product transducer
$\calT$ is defined as follows, 
\[ \calT = (Q, A, B', \Delta, s, F, \lambda, \omicron), \text{ where} \] 
\begin{itemize}
    \item $Q = Q_0 \times Q_1 \times \cdots \times Q_m$ is the set of states,
    \item $B' = \overbrace{(B \cup \{\varepsilon\}) \times \cdots \times (B \cup \{\varepsilon\})}^{\text{$m+1$ times}}$ is the output alphabet, 
    \item $s= (s_0,s_1,\cdots, s_m)$ is the initial state,
    \item $ \Delta= \big\{\big((q_0,q_1,\cdots,q_m),a,(p_0,p_1, \cdots,p_m)\big) \mid \forall i \in \{0, 1,\cdots, m\},\  (q_i,a,p_i) \in \Delta_i\big\}$  is the set of transitions,
    \item $F= \{(q_0,q_1,\cdots,q_m) \mid q_0 \in F_0\text{ and }  \forall i\in [m],\ q_i \in F_i \iff i \in P \}$  is the set of final states,
    \item $\lambda$ is the output transition labelling function where 
        $\lambda((q_0,q_1,\cdots,q_m),a,(p_0,p_1,\cdots,p_m)) = (v_0,v_1,\cdots,v_m)$
        if $v_i = \lambda_i(q_i,a,p_i) $ for $i \in \{0,1,\cdots,m\}$, and 
    \item $\omicron$ is the output state labelling function and is defined as \[\omicron(q_0,q_1,\cdots,q_m) = (\omicron_0(q_0),\omicron_1(q_1),\cdots,\omicron_m(q_m)).\]
\end{itemize}

Note that the constructed product transducer $\calT$ is sequential and $\dom(\calT) = \dom(f_{|P}) = C_P$. WLOG, we further assume that $\calT$ is trim for the remainder of our proofs. 

\begin{figure}
\centering
\scalebox{.9}{
\begin{tikzpicture}[>=stealth,auto]
\tikzset{
state/.style={circle,draw,minimum size=18pt,inner sep=1pt},
box/.style={draw,thick,gray,minimum width=5.2cm,minimum height=3.5cm},
every edge/.style={draw,->,line width=.4mm}
}
\tikzset{initial text={}}

\node[box,anchor=west,color=white] (B1) at (0,0) {};
\node[box,anchor=west,,color=white] (B2) at (B1.east) {};
\node[box,anchor=west,,color=white] (B3) at (B2.east) {};

\node[state,initial,accepting] (q1) at ($(B1.center)+(-1,0)$) {$q_1$};
\node[state,accepting] (q2) at ($(B1.center)+(1,0.4)$) {$q_2$};
\node[state,accepting] (q3) at ($(B1.center)+(1,-0.4)$) {$q_3$};

\path
(q2) edge[loop above] node {$a|a,b|a$} ()
(q1) edge node[above] {$a|a$} (q2)
(q3) edge[loop below] node {$a|b,b|b$} ()
(q1) edge node[below] {$b|b$} (q3);

\node at ($(B1.north)+(-1,-0.7)$) {\Large $D$};

\node[state,initial,accepting] (p) at ($(B2.center)+(0,0.6)$) {$p$};
\node[state,initial,accepting] (r) at ($(B2.center)+(0,-0.6)$) {$r$};

\path
(p) edge[loop above] node {$a|a,b|a$} ()
(r) edge[loop below] node {$a|b,b|b$} ();

\node at ($(B2.center)+(-1.2,0.6)$) {\Large $D_1$};
\node at ($(B2.center)+(-1.2,-0.6)$) {\Large $D_2$};

\node[state,initial,accepting] (s1) at ($(B3.center)+(-2.5,0)$) {$q_1pr$};
\node[state,accepting] (s2) at ($(B3.center)+(.5,0.5)$) {$q_2pr$};
\node[state,accepting] (s3) at ($(B3.center)+(.5,-0.5)$) {$q_3pr$};

\path
(s2) edge[loop above] node {$a|(a,a,b),b|(a,a,b)$} ()
(s1) edge node[above] {$a|(a,a,b)$} (s2)
(s3) edge[loop below] node {$a|(b,a,b),b|(b,a,b)$} ()
(s1) edge node[below] {$b|(b,a,b)$} (s3);

\node at ($(B3.north)+(-2,-0.3)$) {\Large $\calT$};

\end{tikzpicture}}
\caption{A sequential transducer $D$ recognising the function 
$f:\,\sigma u \mapsto \sigma^{|u|+1}$; a multi-sequential transducer 
$D_1 \cup D_2$ recognising the relation $g: u \mapsto \{a^{|u|}, b^{|u|}\}$; 
and their product transducer $\calT$ w.r.t.~unique partition $P = \{1,2\}$ where $C_P = \dom(f)$.}
\label{fig}
\end{figure}

The idea is to reduce the problem of checking whether $\overrightarrow{d}(f_{|P},R) < \infty$ to several instances of checking whether the distance between two sequential transducers is finite. Checking finiteness of distance for functional transducers w.r.t.~Levenshtein family was based on the notion of conjugacy in loops (See Propositions~\ref{prop:conj} and \ref{prop:levboundconj}). Conjugacy in loops is also necessary in the context of relative distance, as stated in the following claim. Towards this, we define connected loops. Two loops of $\calT$, rooted at states $q_1$ and $q_2$ respectively, are \emph{connected} if there exists a path between $q_1$ and $q_2$ in $\calT$.  Extending this notion to sets of loops, we say that a set of loops is connected in $\calT$ if there exists a run of $\calT$ such that all loops in the set appears in the run.

\begin{claim}\label{claim:connected}
Let $L$ be a set of connected loops in $\calT$. For each loop $l \in L$, let $(v_0^l, v_1^l, \ldots, v_m^l)$ denote the corresponding output tuple produced along that loop. If $\overrightarrow{d}(f_{|P},R) < \infty$, then there exists a common index $i \in P$ s.t.~$v_0^l$ and $v_i^l$ are conjugate for every $l \in L$.
\end{claim}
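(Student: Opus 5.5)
We argue by contraposition: assuming that for every $i \in P$ some loop $l_i \in L$ has $v_0^{l_i} \not\sim v_i^{l_i}$, we construct a family of inputs in $C_P$ on which $f_{|P}$ is arbitrarily far from every output of $R$.

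\textbf{Building the inputs.} Since $L$ is connected, there is a run of $\calT$ in which all loops of $L$ appear. Because $\calT$ is trim, this run extends to an accepting run. Write its input as
\[
w_k = x_0\, u_1^{k}\, x_1\, u_2^{k}\, x_2 \cdots u_r^{k}\, x_r,
\]
where $u_1,\ldots,u_r$ are the input labels of the loops of $L$, in order of appearance along the run, and each loop is pumped $k$ times. Since $\calT$ is sequential and trim with $\dom(\calT)=C_P$, we have $w_k \in C_P$ for every $k$. The components $D_j$ with $j \notin P$ reject $w_k$, so $R(w_k) = \{D_i(w_k) \mid i \in P\}$.

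Because all $D_i$ are deterministic and $\calT$ is their product, the output of $D_0$ on $w_k$ is $y_0\,(v_0^{l_1})^{k}\,y_1 \cdots (v_0^{l_r})^{k}\,y_r$. Likewise, the output of $D_i$ on $w_k$ is $y^i_0\,(v_i^{l_1})^{k}\cdots (v_i^{l_r})^{k}\,y^i_r$. The words $y$ are fixed and independent of $k$.

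\textbf{Concluding under the assumption.} Suppose $\overrightarrow{d}(f_{|P},R) \le C < \infty$. Then for each $k$ there is an $i_k \in P$ with $d(f(w_k), D_{i_k}(w_k)) \le C$. Since $P$ is finite, some fixed $i \in P$ occurs as $i_k$ for infinitely many $k$; call this infinite set $\mathbb{I}$. We then show that $v_0^{l} \sim v_i^{l}$ for every $l \in L$, which contradicts the existence of $l_i$.

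\textbf{Main obstacle.} Proposition~\ref{prop:conj} deals with a single pumped factor, whereas here we have $r$ pumped blocks. There are two ways to handle this.

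\emph{Option 1: pump at different rates.} Use inputs in which loop $l_j$ is pumped $k^{j}$ times, or $N^j k$ times with $N$ large, so that the blocks have very different lengths. Define $\mathbb{I}$ for this family. A bounded-cost alignment between the two outputs must match the $j$-th block of $D_0$ mostly against a long periodic factor of the $D_i$ output. Lengths must agree up to $C$, and since the block lengths are separated, that factor must lie inside the $j$-th block of $D_i$. Fine and Wilf's theorem then gives conjugacy of the primitive roots. Equal lengths follow from the length comparison, by induction on $j$ from the longest block downward, and equal lengths plus conjugate primitive roots give conjugacy.

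\emph{Option 2: reduce to one loop at a time.} For each fixed $l \in L$, pump only $l$ with exponent $k$ and keep every other loop traversed once. This still yields a run in $C_P$, and Proposition~\ref{prop:conj} applies directly. However, the index $i_k$ obtained this way may depend on $l$, so it does not by itself give a common index.

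The fix is to choose the pumping exponents of different loops at separated scales. Take $k_j = M^{j}$ with $M \to \infty$, and pick one index $i$ along a subsequence that works for all scales simultaneously. Each block is then isolated by the length argument of Proposition~\ref{prop:conj}: the edit distance bound $C$ cannot absorb length discrepancies of order $M^{j}$. This forces $|v_0^{l_j}| = |v_i^{l_j}|$ for each $j$ successively, starting from the highest scale. It then forces large overlaps between the $j$-th pumped factors, which gives conjugacy by Fine and Wilf.

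Carefully justifying this block-wise isolation is the technical heart of the proof. Everything else is routine bookkeeping with the product construction.
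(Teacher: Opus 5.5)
Your argument is correct, but it takes a different route from the paper's. The paper fixes two loops $l_1\to l_2$. It uses \Cref{prop:conj} as a black box on one-parameter families (pump one loop, keep the other fixed), together with pigeonhole, to obtain nonempty sets $J_1,J_2$ of indices conjugate on $l_1$ and on $l_2$ respectively. It then assumes $J_1\cap J_2=\emptyset$ and runs a three-case analysis showing that every $i\in P$ fails for large $k_1,k_2$, and asserts that the argument extends to arbitrary connected sets. Your Option~2 is exactly the paper's first step, and you correctly see that it alone gives no common index. You instead pump all loops at once at separated scales $M^j$, pigeonhole once, and prove directly that the surviving index is conjugate on every loop.

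The paper's version is shorter because it reuses \Cref{prop:conj} verbatim. Its combination step, however, is exactly where one-parameter pumping stops being enough. Once both loops are pumped, the surrounding words depend on the parameters, and equal rates can be fooled. For example, $a^{k_1}(aa)^{k_2}$ and $(aa)^{k_1}a^{k_2}$ coincide whenever $k_1=k_2$, although neither loop pair is conjugate. Your separated scales make this step rigorous and treat any finite connected set uniformly.

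The part you flag as the technical heart is short:
\begin{enumerate}
\item An edit changes length by at most one. So if $\sum_j M^j(|v_0^{l_j}|-|v_i^{l_j}|)+c$ stays bounded for infinitely many $M$, all the length differences vanish.
\item Corresponding blocks then start at positions differing by an $M$-independent constant.
\item Suppose $v_0^{l_j}\neq\varepsilon$ (otherwise both words are empty). At most $C$ edits leave an untouched piece of length $\Omega(M^j)$ inside the $j$-th block of $f(w)$. Up to $O(1)$ letters at its ends, this piece reappears inside the $j$-th block of $D_i(w)$.
\item For two words of equal length $n$, a common factor of length $n$ of a power of each already forces them to be conjugate.
\end{enumerate}
One small fix: the loops may overlap in the witnessing run. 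So build $w_k$ from connecting paths between the loop roots, taken in their order of appearance, rather than reading it off that run.
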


\begin{claimproof}
Assume $\overrightarrow{d}(f_{|P},R) = K$ for a constant $K \geq 0$. Let $l_1$ and $l_2$ be two connected loops rooted at states $q_1$ and $q_2$, respectively. Without loss of generality, assume that there is a path from $l_1$ to $l_2$ in $\calT$. Since $\calT$ is trim, there exist  words $u,v,u',v',w$, an accepting state $q_f$ and some output tuples \\
$
(u_0, u_1, \ldots, u_m), \; (v_0, v_1, \ldots, v_m), \; (u'_0, u'_1, \ldots, u'_m), \; (v'_0, v'_1, \ldots, v'_m), \; (w_0, w_1, \ldots, w_m),
$ such that
\[
s \xrightarrow{u | (u_0, \ldots, u_m)} q_1 \xrightarrow{v | (v_0,\ldots,v_m)} q_1 \xrightarrow{u' | (u'_0, \ldots, u'_m)} q_2 \xrightarrow{v' | (v'_0,\ldots,v'_m)} q_2 \xrightarrow{w|(w_0,\ldots,w_m)} q_f.
\]

Since $\overrightarrow{d}(f_{|P},R) =K $, for every input word in $\dom(f_{|P})$, the edit distance between the output of $f_{|P}$ and some output of $R$ is less than or equal to $K$. In particular, for each input of the form $uv^{k_1}u'{v'}^{k_2} w$ with $k_1,k_2 \ge 0$, there exists an index $j \in P$ s.t.~
$
d\big(u_0 v_0^{k_1} u'_0 {v'_0}^{k_2} w_0,\; u_j v_j^{k_1} u'_j {v'_j}^{k_2} w_j\big) \le K$. We restrict only to indices in $P$ since the input word in $\dom(f_{|P}) = C_P$ will not be accepted by any transducer $D_j$ for $j \in [m] \setminus P$ (by definition of the partition $P$).

\medskip
\noindent
\textbf{Finding indices conjugate to $v_0$ and $v_0'$ in $P$.}  
First, increase $k_1$ while keeping $k_2=1$. By the pigeonhole principle, since $|P|$ is finite, there exists an index $i_1 \in P$ and an infinite subset $I \subseteq \mathbb{N}$ s.t.~for all $k_1 \in I$,
$d(u_0 v_0^{k_1} u'_0 {v'_0} w_0, u_{i_1} v_{i_1}^{k_1} u'_{i_1} {v'_{i_1}} w_{i_1}) \le K$.
Applying \Cref{prop:conj}, we conclude that $v_0$ is conjugate to $v_{i_1}$. Let $J_1 = \{ j \in P \mid v_0 \sim v_j \}$ denote the set of all indices in $P$ whose outputs are conjugate to $v_0$.
By a symmetric argument by fixing $k_1=1$ and increasing $k_2$,  we get the set $J_2 = \{ j \in P \mid v_0' \sim v_j' \}$, the set of indices in $P$ whose outputs are conjugate to $v_0'$.

\medskip
\noindent
\textbf{Existence of a common index in $P$.}  
Assume for contradiction that there is no common index $i \in P$ such that $v_0 \sim v_i$ and $v_0' \sim v_i'$, i.e., $J_1 \cap J_2 = \emptyset$. We show that, for all $i \in P$, $d(u_0 v_0^{k_1} u'_0 {v'_0}^{k_2} w_0, u_i v_i^{k_1} u'_i {v'_i}^{k_2} w_i) > K$ when $k_1, k_2$ are sufficiently large (say $2^K$). Since $J_1 \cap J_2 = \emptyset$, for any $i \in [m]$, exactly one of the following three cases holds:
{\renewcommand{\labelenumi}{\textbf{Case-\arabic{enumi}:}}
\setlength{\leftmargini}{40pt}
\begin{enumerate}
    \item $i \in J_1$, $i \notin J_2$. Here $v_0 \sim v_i$ but $v_0' \not\sim v_i'$. So for sufficiently large $k_2$, the edit distance
    $
    d(u_0 v_0^{k_1} u'_0 {v'_0}^{k_2} w_0, u_i v_i^{k_1} u'_i {v'_i}^{k_2} w_i)
    $ 
    grows unboundedly (\Cref{prop:conj}), contradicting\\ $d(u_0 v_0^{k_1} u'_0 {v'_0}^{k_2} w_0, u_i v_i^{k_1} u'_i {v'_i}^{k_2} w_i) \leq K$.

    \item $i \notin J_1$, $i \in J_2$. Here $v_0 \not\sim v_i$ but $v_0' \sim v_i'$. So for sufficiently large $k_1$, the distance
    $
    d(u_0 v_0^{k_1} u'_0 {v'_0}^{k_2} w_0, u_i v_i^{k_1} u'_i {v'_i}^{k_2} w_i)
    $ 
    becomes unbounded, again contradicting\\ $d(u_0 v_0^{k_1} u'_0 {v'_0}^{k_2} w_0, u_i v_i^{k_1} u'_i {v'_i}^{k_2} w_i) \leq K$.

    \item $i \notin J_1 \cup J_2$. Here $v_0 \not\sim v_i$ and $v_0' \not\sim v_i'$, for sufficiently large $k_1$ and $k_2$, the distance becomes unbounded contradicting $d(u_0 v_0^{k_1} u'_0 {v'_0}^{k_2} w_0, u_i v_i^{k_1} u'_i {v'_i}^{k_2} w_i) \leq K$. 
\end{enumerate}
}
Since all three cases lead to a contradiction, there must exist a common index $i \in P$ such that $v_0 \sim v_i$ and $v_0' \sim v_i'$. This proof can be generalised to any set of loops that occur along a single run, that is, to any set of connected loops.
\end{claimproof}

  \Cref{claim:connected} holds when the loops are connected and need not be true for all loops of a trim transducer $\calT$. For instance, the loops in the product transducer $\calT$, depicted in \Cref{fig}, rooted at states $q_{2}pr$ and $q_{3}pr$ are disjoint and have no common index witnessing conjugacy: in the former loop the outputs of $\calD$ and $\calD_1$ are conjugate, whereas in the latter they are not.

     Now, in order to decide whether $\overrightarrow{d}(f_{|P},R) < \infty$, we decompose $\calT$ into a directed acyclic graph of maximal strongly connected components (SCCs) $S_1,\dots,S_r\subseteq Q$ for some $r\in\mathbb{N}$ disregarding the labels on the transitions. 
     Consider the set of paths $\Pi$, where each $\pi \in \Pi$ is of the form $S_{i_1}t_{i_1}S_{i_2}\dots t_{i_{n-1}}S_{i_n}$ where $S_{i_1}$ is an SCC that contains an initial state, $S_{i_n}$ is an SCC that contains a final state, and for all $1\leq k<n$, $t_{i_k}$ is a transition of $\calT$ from a state in $S_{i_k}$ to some state in $S_{i_{k+1}}$. Let $\calT_\pi$ denote the trim subtransducer of $\calT$ obtained by removing all the transitions in $\calT$ except the transitions $t_{i_k}$ $(1\leq k<n)$ and the transitions within the SCCs $S_{i_k}$ for $k\in[n]$. 
      Note that since the SCCs are maximal, the set $\Pi$ is finite. 
    Now, it is straightforward to see that $\calT \equiv \bigcup_{\pi \in \Pi} \calT_\pi$. 
    
    For each $i \in \{0, \ldots, m\}$, let $\mathcal{T}_{\pi(i)}$ denote the transducer obtained from $\mathcal{T}_\pi$ by projecting the output labelling functions to the $i$-th component in the output of $T_{\pi}$. That is, a transition $\delta$ in $\mathcal{T}_\pi$ labelled with the output tuple $(v_0, v_1, \ldots, v_m)$ is labelled with $v_i$ in $\mathcal{T}_{\pi(i)}$. Similarly, a state $q$ in $\mathcal{T}_\pi$ labelled with the output tuple $(v_0, v_1, \ldots, v_m)$ is labelled with $v_i$ in $\mathcal{T}_{\pi(i)}$. Note that $\dom(\calT_{\pi(i)}) = \dom (\calT)$.

    \begin{claim}\label{claim:distclaim}
       $\overrightarrow{d}(f_{|P},R) < \infty$ iff for all paths $\pi \in \Pi$, $\exists i \in P$ s.t.~$d(\calT_{\pi(0)},\calT_{\pi(i)}) < \infty$.
    \end{claim}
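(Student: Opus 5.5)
We prove the two directions separately, using the decomposition $\calT \equiv \bigcup_{\pi\in\Pi}\calT_\pi$ together with \Cref{prop:levboundconj} applied to the pair of functional (indeed sequential) transducers $\calT_{\pi(0)}$ and $\calT_{\pi(i)}$. Both transducers share the underlying automaton of $\calT_\pi$, so their domains coincide and their trim Cartesian product is essentially $\calT_\pi$ itself. Hence $d(\calT_{\pi(0)},\calT_{\pi(i)})<\infty$ holds iff for every loop of $\calT_\pi$ with output tuple $(v_0,\ldots,v_m)$ we have $v_0\sim v_i$.

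\textbf{($\Leftarrow$).} Let $u\in\dom(f_{|P})=C_P$. Since $\calT$ is sequential, $u$ has a unique accepting run, and this run lies in $\calT_\pi$ for some $\pi\in\Pi$. Choose $i\in P$ with $d(\calT_{\pi(0)},\calT_{\pi(i)})\le K_\pi<\infty$. The output of $\calT_{\pi(0)}$ on $u$ is $f(u)$, and the output of $\calT_{\pi(i)}$ on $u$ is $D_i(u)$. Because $u\in C_P$ and $i\in P$, the run is accepting in $D_i$, so $D_i(u)\in R(u)$. Therefore $\inf_{v\in R(u)} d(f(u),v)\le \max_\pi K_\pi$, which is finite because $\Pi$ is finite. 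We also already have $\dom(f_{|P})\subseteq\dom(R)$, so $\overrightarrow{d}(f_{|P},R)<\infty$.

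\textbf{($\Rightarrow$).} Fix $\pi=S_{i_1}t_{i_1}\cdots S_{i_n}$. By the characterisation above, it suffices to find one $i\in P$ such that $v_0\sim v_i$ for \emph{every} loop of $\calT_\pi$. The loops of $\calT_\pi$ may be arbitrarily many, so we cannot apply \Cref{claim:connected} to all of them directly. Instead, we build a single loop per SCC that subsumes every other loop there. In each SCC $S_{i_k}$, fix a root $r_k$ and form a loop $\ell_k$ at $r_k$ that traverses every transition of $S_{i_k}$; this is possible because the SCC is strongly connected. If the SCC has no transitions, there are no loops to handle. The finite set $\{\ell_1,\ldots,\ell_n\}$ is connected in $\calT$ via the transitions $t_{i_k}$, and the resulting run is accepting because $\calT_\pi$ is trim. \Cref{claim:connected} then gives a common $i\in P$ with $v_0^{\ell_k}\sim v_i^{\ell_k}$ for all $k$.

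\textbf{Main obstacle.} The difficulty is transferring conjugacy from the master loops $\ell_k$ to an arbitrary loop $l$ inside $S_{i_k}$. Our plan is to rerun the pumping argument of \Cref{claim:connected} on the set $\{\ell_1,\ldots,\ell_n,l\}$, combined into runs that pump both $\ell_k$ and $l$. \Cref{claim:connected} applied to this set gives some index $i'$ that works for all of them, but $i'$ may differ from $i$.

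To fix the index, we instead work with composite loops. Consider the loop $\ell_k'$ that goes from $r_k$ into $l$, traverses $l$ once, returns to $r_k$, and then traverses $\ell_k$. Applying \Cref{claim:connected} to the finite family of such composite loops, together with the $\ell_k$ themselves, gives a single index $i$ that works for the whole family. This is legitimate because, up to conjugacy, the set of loops in each SCC is generated by finitely many simple cycles, namely those of length at most $|Q|$. So we apply \Cref{claim:connected} to the finite set consisting of one connecting run that visits every simple cycle of every SCC on $\pi$.

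It then remains to show that conjugacy on all simple cycles, with a common index, implies conjugacy on every loop. For this we use that $d(\calT_{\pi(0)},\calT_{\pi(i)})<\infty$ is exactly the condition of \Cref{prop:levboundconj}. If some loop $l$ failed conjugacy, pumping $l$ together with all simple cycles inside one accepting run would contradict $\overrightarrow{d}<\infty$, by the same case analysis as in \Cref{claim:connected}. That completes the argument for this direction, and hence the proof.
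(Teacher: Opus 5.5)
Your ($\Leftarrow$) direction is correct and matches the paper. So is your reduction of $d(\calT_{\pi(0)},\calT_{\pi(i)})<\infty$ to conjugacy on all loops of $\calT_\pi$ (via the diagonal of the product and \Cref{prop:levboundconj}).

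The ($\Rightarrow$) direction has a genuine gap, exactly at the point you flagged: choosing one index $i\in P$ that works for \emph{every} loop of $\calT_\pi$. Your final step claims that conjugacy with a common index on all simple cycles implies conjugacy on every loop. This is false. Suppose a state carries two self-loops whose output pairs $(v_0,v_i)$ are $(ab,ba)$ and $(c,c)$. Both simple cycles are conjugate, but the loop taking both gives $(abc,bac)$, which is not; so the ``up to conjugacy'' generation by simple cycles does not hold. Your fallback is to pump the offending loop together with all simple cycles and reuse the case analysis of \Cref{claim:connected}. This gives no contradiction, because \Cref{claim:connected} only yields \emph{some} common index $i'$ for the enlarged family, and $i'$ may differ from the $i$ you fixed earlier. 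That is the very issue you noted a paragraph before. Concretely, take $P=\{1,2\}$, with component $2$ equal to component $0$ and component $1$ as above. The simple cycles are compatible with both $i=1$ and $i=2$, so your procedure may pick $i=1$. The combined loop then fails for $i=1$, yet $\overrightarrow{d}(f_{|P},R)<\infty$ holds.

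The missing idea is a pigeonhole over the finite set $P$, which needs no master loops or simple cycles. Suppose no $i\in P$ works for all loops of $\calT_\pi$. Then for each $j\in P$ pick a loop $l_j$ of $\calT_\pi$ with $v_0^{l_j}\not\sim v_j^{l_j}$. Every loop of $\calT_\pi$ lies inside a single SCC $S_{i_k}$. Each SCC is strongly connected, and the SCCs are chained by the transitions $t_{i_k}$ from an initial to a final state. Hence the finite family $\{l_j\mid j\in P\}$ can be threaded into one accepting run, i.e.\ it is connected. \Cref{claim:connected} then gives some $i\in P$ with $v_0^{l_j}\sim v_i^{l_j}$ for all $j\in P$, in particular for $j=i$, which is a contradiction. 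With such an $i$, \Cref{prop:levboundconj} yields $d(\calT_{\pi(0)},\calT_{\pi(i)})<\infty$, as you intended. The paper simply applies \Cref{claim:connected} to the whole (infinite) set of loops of $\calT_\pi$ at once. The argument above is what makes that step rigorous.
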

    \begin{claimproof}
        ($\Leftarrow$) Assume that for each path $\pi \in \Pi$, there is an index $i_\pi \in P$ s.t.~$d(\calT_{\pi(0)}, \calT_{\pi(i_\pi)}) = k_\pi$ for some $k_\pi \geq 0$.  
        Since $\calT$ is sequential and $\calT \equiv \bigcup_{\pi \in \Pi} \calT_\pi$, every input word $w$ is accepted by a unique path $\pi_w \in \Pi$. In fact, $w$ is accepted by $\calT_{\pi_w(i)}$ for all $i \in P$ since $\dom(f_{|P}) = \dom(\calT) = C_P$. Thus, for any $w$, the edit distance between $f(w)$ and some output of $R(w)$ is bounded by $k_{\pi_w}$. Hence,
        $
        \overrightarrow{d}(f_{|P},R) \leq \max\{k_\pi \mid \pi \in \Pi\} 
        $.
        
        $(\Rightarrow)$ Assume $\overrightarrow{d}(f_{|P},R) < \infty$.  A path $\pi \in \Pi$ is of the form $S_1 t_1 S_2 t_2 \cdots S_n$, where $S_1$ contains an initial state, $S_n$ contains a final state, and every $S_j$ ($j \in [n]$) is strongly connected. Hence, all loops in $\pi$ are connected.  By \Cref{claim:connected}, there exists an index $i \in P$ s.t.~for every output tuple $(v_0, v_1, \ldots, v_m)$ produced along a loop of $\calT_\pi$, the words $v_0$ and $v_i$ are conjugate. This implies that the Cartesian product of $\calT_{\pi(0)}$ and $\calT_{\pi(i)}$ produces only conjugate pairs of output words. Also, $\dom(\calT_{\pi(0)}) = \dom(\calT_{\pi(i)}) = \dom (\calT)$. By \Cref{prop:levboundconj}, we get $d(\calT_{\pi(0)}, \calT_{\pi(i)}) < \infty$.   
        \end{claimproof}

    By virtue of the above claim, checking whether $\overrightarrow{d}(f_{|P},R) < \infty$ reduces to computing $d\bigl(\mathcal{T}_{\pi(0)}, \mathcal{T}_{\pi(i)}\bigr)$ for all $i \in P$ and all $\pi \in \Pi$, and verifying that for each path $\pi \in \Pi$, there exists some $i \in P$ s.t.~$d\bigl(\mathcal{T}_{\pi(0)}, \mathcal{T}_{\pi(i)}\bigr) < \infty$. Since  $\mathcal{T}$ is sequential, each $\mathcal{T}_{\pi(i)}$ for $i \in \{0, 1, \ldots, m\}$ and $\pi \in \Pi$ is also sequential. Therefore, computing $d\bigl(\mathcal{T}_{\pi(0)}, \mathcal{T}_{\pi(i)}\bigr)$ amounts to computing the edit distance between sequential transducers, which is decidable \cite{editdistance}. Consequently, using \Cref{eq: rel}, we get the following Lemma.

\begin{lemma}\label{lem:compreldis}
     Given a sequential function $f$ and a multi-sequential relation $R$, it is decidable whether the relative distance $\overrightarrow{d}(f,R)$ is finite for any metric $d \in \{\dlev, d_{lcs}, d_{dl}\}$.
 \end{lemma}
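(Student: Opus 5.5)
The proof assembles the pieces developed just before the statement into an algorithm.

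\emph{Step 1 (domain inclusion).} Given the sequential transducer $D_0$ for $f$ and the sequential transducers $D_1,\dots,D_m$ whose union is $R$, first decide whether $\dom(f)\subseteq\dom(R)$. This is language inclusion between the deterministic automaton underlying $D_0$ and the union of the deterministic automata underlying the $D_i$, and it is decidable~\cite{stock73}. If inclusion fails, then $\overrightarrow{d}(f,R)=\infty$ by \Cref{def:relative}, and we answer ``not finite''.

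\emph{Step 2 (reduction to the classes $C_P$).} Otherwise, for each nonempty $P\subseteq[m]$, build the trim product transducer $\calT$ with the final states that select $C_P$. If $C_P=\emptyset$, the trimmed $\calT$ is empty and $\overrightarrow{d}(f_{|P},R)=0$ contributes nothing. By \Cref{eq: rel}, $\overrightarrow{d}(f,R)<\infty$ iff $\overrightarrow{d}(f_{|P},R)<\infty$ for every $P\in\calP$, since a maximum of finitely many values is finite iff each is. So it suffices to decide finiteness for each fixed $P$.

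\emph{Step 3 (finitely many path subtransducers).} For fixed $P$, compute the SCC decomposition of $\calT$ and the finite set $\Pi$ of SCC paths. For each $\pi\in\Pi$ and each $i\in\{0\}\cup P$, effectively construct the projections $\calT_{\pi(i)}$. These are sequential because $\calT$ is, and they satisfy $\dom(\calT_{\pi(0)})=\dom(\calT_{\pi(i)})$.

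\emph{Step 4 (decision via \Cref{claim:distclaim}).} By \Cref{claim:distclaim}, $\overrightarrow{d}(f_{|P},R)<\infty$ iff every $\pi$ admits some $i\in P$ with $d(\calT_{\pi(0)},\calT_{\pi(i)})<\infty$. Each such distance is between two sequential, hence functional, transducers with equal domains, so it is computable by~\cite{editdistance}. Alternatively, finiteness can be tested directly through \Cref{prop:levboundconj}: check that every pair of loop outputs in the trim product $\calT_{\pi(0)}\times\calT_{\pi(i)}$ is conjugate, which is decidable~\cite{decidingconjugacy}. Ranging over the finitely many $P$, $\pi$ and $i$ gives a terminating procedure.

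The only point needing care is the use of $\dom(\calT_{\pi(i)})=\dom(\calT)$. Strictly, the claim needs $\dom(\calT_{\pi(0)})=\dom(\calT_{\pi(i)})$, which holds because both are projections of the same automaton $\calT_\pi$. The correctness argument itself was already carried out in \Cref{claim:connected} and \Cref{claim:distclaim}, so the proof is essentially bookkeeping plus effectiveness of each construction.
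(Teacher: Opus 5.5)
Your proposal is correct and follows the paper's argument step for step. The steps are the same: the domain-inclusion check, the reduction to the classes $C_P$ via \Cref{eq: rel}, the SCC-path decomposition into the sequential projections $\calT_{\pi(i)}$, and finally \Cref{claim:distclaim} combined with the computability of edit distance between sequential transducers from~\cite{editdistance}. Your remark that what is actually needed is $\dom(\calT_{\pi(0)})=\dom(\calT_{\pi(i)})$ is right: both domains equal $\dom(\calT_\pi)$, not $\dom(\calT)$ as the paper writes, so this harmlessly fixes a small imprecision.
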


\subsection{Deciding k-finiteness of relative distance}
In the next lemma, we prove that the $k$-finiteness of relative distance is decidable. The proof is a generalisation of the proof used for checking whether the edit distance between two sequential functions is at most a given $k$ (Proposition 3.11, \cite{editdistance}).

\begin{lemma}\label{lem:kclosemult}
    For a given $k$, it is decidable to check whether the relative distance between a sequential function and a multi-sequential relation is less than or equal to $k$ with respect to any metric $d$ given in \Cref{table:editdistance}.
\end{lemma}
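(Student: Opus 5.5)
We plan to decide $\overrightarrow{d}(f,R)\le k$ by building a finite automaton that recognises exactly the inputs $u\in\dom(f)$ on which $f(u)$ is at distance more than $k$ from every output in $R(u)$, and then testing it for emptiness. This follows the approach of Proposition 3.11 in \cite{editdistance} for two sequential functions. We first check $\dom(f)\subseteq\dom(R)$ by inclusion of the underlying automata; if this fails, the answer is negative. Otherwise, as in the previous subsection, we work with the sequential product transducer $\calT$ of $D_0,D_1,\dots,D_m$. For $\calT$ we drop the partition and instead record, in the final states, which components $D_i$ accept.

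The key ingredient is a bounded-delay representation of the edit distance. Consider reading an input synchronously in $D_0$ and $D_i$, producing output prefixes $x$ and $y$. If $d(f(u),y')\le k$ for the eventual full output $y'$ of $D_i$, then the two outputs are never more than $k$ letters out of phase. We therefore keep, for each $i\in[m]$, a finite summary of the ``edit state'' between $x$ and $y$.

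Concretely, we use the standard dynamic-programming band. Keep the pending suffix (of length at most some bound $N$ depending on $k$ and the maximal transition output length) of whichever output is ahead. Alongside it, store a vector of values in $\{0,\dots,k,\top\}$: each value is the minimal cost of aligning the current prefix of one output with prefixes of the other ending within offset $k$. For the Levenshtein distance this is the diagonal band of width $2k+1$ of the DP table. For $d_{lcs}$ and $d_{dl}$ we use the analogous band; for $d_{dl}$ we also remember the last letters, so that swaps can be handled. Whenever the length lag exceeds the bound, or all entries become $\top$, we store the absorbing value $\top$ for component $i$.

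Since there are finitely many such summaries, we obtain a deterministic automaton $\calA_k$. Its states are pairs: a state of $\calT$, and one summary per component $i\in[m]$. Its transitions update each summary after appending the outputs $\lambda_0$ and $\lambda_i$. A state is \emph{bad-final} if three things hold. First, the $D_0$-state is final. Second, for every $i$ whose $D_i$-state is final, finishing the summary with the final outputs $\omicron_0$ and $\omicron_i$ yields distance $>k$ (or $\top$). Third, we include components $i$ whose $D_i$-state is non-final; they are simply ignored, as in the construction of $F$ for $\calT$. Then $L(\calA_k)=\{u\in\dom(f)\mid \forall v\in R(u),\ d(f(u),v)>k\}$, so $\overrightarrow{d}(f,R)\le k$ iff $L(\calA_k)=\emptyset$, which is decidable.

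The main obstacle is proving that the finite band summary is exact. We must show that truncating to $\top$ never discards an alignment that could later complete with cost $\le k$. We must also show that the pending-suffix buffer stays bounded whenever the cost is $\le k$. Both reduce to one fact: in any alignment of cost $\le k$, every prefix of one word is matched to a prefix of the other whose length differs by at most $k$. We will prove this carefully for each of the three metrics, with the swap case of $d_{dl}$ requiring one extra buffered letter. Once this lemma holds for a single pair $(D_0,D_i)$, it extends to $m$ pairs because the components evolve independently on the same input.
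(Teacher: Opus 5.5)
Your construction fails at the step you flag as the main obstacle. The ``one fact'' you reduce it to does not give what you need. The band property of an alignment of cost $\le k$ concerns positions in the two \emph{final} words. It says nothing about how far apart the prefixes $x,y$ are that $D_0$ and $D_i$ have emitted after the same input prefix, and that gap is exactly what your pending-suffix buffer must hold. It can be arbitrarily large compared with $k$ and the output lengths, even when the final outputs coincide. Take $k=0$, unary alphabets, and both domains equal to $a^{\ge 2M}$. Let $D_0$ emit $c$ on every letter, and let $D_1$ emit nothing on the first $M$ letters, $cc$ on the next $M$, and $c$ afterwards, with all final outputs empty. Then $f=R$, so $\overrightarrow{d}(f,R)=0$, and every output has length at most $2$. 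Yet after $a^M$ the lag is $M$. For $M$ larger than your bound $N$, component $1$ becomes $\top$, $a^{2M}$ is accepted as bad, and the procedure answers incorrectly. The lag can only be bounded in terms of the number of states, and only under a finiteness assumption.

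Moreover, no choice of bound makes the per-component summary exact, so the equality $L(\calA_k)=\{u\in\dom(f)\mid \forall v\in R(u),\ d(f(u),v)>k\}$ cannot hold in general. With the sequential functions $f(a^nba^m)=c^n$ and $R(a^nba^m)=\{c^m\}$ and $k=0$, the right-hand side is $\{a^nba^m\mid n\neq m\}$, which is not regular.

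What is missing is what the paper does before building its automaton:
\begin{itemize}
\item First decide $\overrightarrow{d}(f,R)<\infty$ via Lemma~\ref{lem:compreldis}, and answer ``no'' otherwise.
\item Then use finiteness and a pumping argument on the product $\calT$ to obtain a lag bound $\partial_{\max}=N\cdot\ell$, with $N$ the number of states and $\ell$ the maximal per-transition length difference.
\item Truncate leftovers at $\max(\partial_{\max},k)$.
\end{itemize}
With truncation, your automaton only over-approximates the bad set. Correctness therefore needs the following statement: when $\overrightarrow{d}(f,R)\le k$, every input keeps at least one witness component $i$ with $d(f(u),D_i(u))\le k$ whose lag stays below the threshold. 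This must come from the loop structure of $\calT$, as in the paper's derivation of $\partial_{\max}$, not from a per-metric alignment lemma.

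Once that is in place, your deterministic band summary with an emptiness test is a legitimate alternative to the paper's route. The paper uses an NFA that guesses the edits and then tests $L(\calA_{C,k})=L$. Your version avoids the implicit complementation of that NFA.
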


\begin{proof}
    Given a sequential function $f$ and a multi-sequential relation $R$ over the output alphabet $B$, we begin by checking whether $\dom(f) \subseteq \dom(R)$; if so, we construct the product transducer $\calT=(Q, A, B', s, \Delta, F, \lambda, \omicron)$, whose domain is $\dom(f) \cap \dom(R)$,  and on each input word, produces an $(m+1)$-tuple consisting of the output of $f$ together with the outputs of the $m$ sequential transducers whose union defines $R$. Let $L$ be the domain of $\calT$. For each $i \in \{0,1,\ldots,m\}$, let $\calT_i=(Q, A, B', s, \Delta, F, \lambda_i, \omicron_i)$  denote the transducer obtained from $\calT$ by projecting the output labelling functions to the $i$-th component of the $(m+1)$-tuple. i.e., for all $q,q'\in Q$ and $a\in B$, if $\lambda(q,a,q')= (v_0, v_1, \ldots, v_m)$, then $\lambda_i(q,a,q')= v_i$. Similarly, for all $q\in Q$, if $\omicron(q)= (v_0, v_1, \ldots, v_m)$, then $\omicron_i(q)=v_i$.

    Observe that $\overrightarrow{d}(f,R) \leq k$ for $k\in\mathbb{N}$ iff for all $w \in L$ we can perform at most $k$ edits to $\calT_0(w)$ and obtain $\calT_i(w)$ for some $i \in [m]$. 
    
    First, we check whether $\overrightarrow{d}(f,R) < \infty$. If so, there exists a maximum length difference, denoted by $\partial_{max}$, between the partial outputs of $\calT_0$ and $\calT_i$ for some $i \in [m]$ on any input. In fact,  $\partial_{max}= N \cdot \ell$, where $N$ is the number of states in $\calT$ and $\ell$ is the maximum length difference between outputs of $\calT_0$ and $\calT_i$ (for all $i \in [m]$) along any transitions.
    Assume for contradiction that there exists a partial input $u$ with output tuple $(v_0,\ldots,v_m)$ s.t.$\mod(|v_0|-|v_i|) > \partial_{max}$ for all $i \in [m]$. Since each transition can contribute at most $\ell$, the length difference on $u$ between outputs of $\calT_0$ and any $\calT_i$ is bounded by $|u|\ell$. If $|u|\ell > N\ell$, then $|u| > N$, and some state repeats along the run of $u$ in $\calT$. For each $i \in [m]$, there exists a loop where the length difference between the outputs of $\calT_0$ and $\calT_i$ increases; otherwise, it contradicts the fact that$\mod(|v_0|-|v_i|) > N \ell$. 
   The idea then is to show that pumping all the loops encountered during the run of $u$ in $\calT$ will strictly increase the length difference between the outputs of $\calT_0$ and $\calT_i$ for all $i\in[m]$ 
   contradicting $\overrightarrow{d}(f,R) < \infty$.

    For each metric $d \in \{d_l, d_{lcs}, d_{dl}\}$, fix the corresponding set of allowed edits $C$. In order to determine whether the relative distance is at most $k$, we construct a nondeterministic finite state automaton $\calA_{C,k}$. The automaton reads words $w \in L$ and accepts $w$ if there exists $i\in[m]$ s.t.~the output $\calT_i(w)$ can be obtained from $\calT_0(w)$ using at most $k$ edits from $C$. The NFA $\calA_{C,k} = (Q', A, s', \Delta', F')$ is defined as follows.
\begin{itemize}
\item  The states $Q'$ of $\calA_{C,k}$ are tuples $(q, (b_1,\ldots,b_m), (L_1,\ldots,L_m))$, where $q$ is a state of $\calT$; for each $i\in[m]$, $b_i \in \{0,\ldots,k\}$ denotes the
remaining edit budget associated with $\calT_i$; and $L_i$ records the current unmatched leftover between the outputs of $\calT_0$ and $\calT_i$.
Formally, for each each $i \in [m]$, a \emph{leftover} $L_i$ is either an element of $(B^* \times \{\epsilon\}) \cup (\{\epsilon\} \times B^*)$ or the special symbol $\text{x}$.  If $L_i = (u_i, \epsilon)$, then $u_i$ is the unmatched suffix of the output of $\calT_0$ that has not yet been matched with the output of $\calT_i$.  If $L_i = (\epsilon, u_i)$, then $u_i$ is the unmatched suffix of the output of $\calT_i$ that has not yet been matched with the output of $\calT_0$. Since the edits in $C$ are local, the alignment must eventually match a common prefix, leaving an unmatched suffix on only one side. If $L_i = \text{x}$, it indicates that the length of the unmatched output exceeds the threshold $\max(\partial_{max}, k)$ and can no longer be matched.  We restrict to states for which at least one $L_i \neq \text{x}$.

\item The initial state is $s' = (s, (k,\ldots,k), ((\epsilon,\epsilon),\ldots,(\epsilon,\epsilon)))$, where $s$ is the initial state of $\calT$.

\item The transition
$((q,(b_1,\ldots,b_m),(L_1,\ldots,L_m)), \sigma, (q',(b_1-b'_1,\ldots,b_m-b'_m),(L'_1,\ldots,L'_m)))$ is in $\Delta'$ 
if there exists a transition $\delta=(q,\sigma,q')$ of $\calT$ with
outputs $\lambda_0(\delta)$ for $\calT_0$ and $\lambda_i(\delta)$ for $\calT_i$ such that, for every $i\in[m]$, the following holds.
The automaton nondeterministically chooses $b'_i\le b_i$ and attempts a partial match using $b'_i$ edits from $C$:
\begin{itemize}
    \item if $L_i=(\epsilon,u)$, then $(\lambda_0(\delta),u\lambda_i(\delta))$
    is partially matched using $b'_i$ edits;
    \item if $L_i=(u,\epsilon)$, then $(u\lambda_0(\delta),\lambda_i(\delta))$
    is partially matched using $b'_i$ edits.
\end{itemize}
The unmatched suffix resulting from this partial match determines the new leftover $L'_i$; if its length exceeds $\max(\partial_{\max},k)$,
then $L'_i=\text{x}$, otherwise $L'_i$ is the resulting unmatched pair. If $L_i=\text{x}$ or no suitable $b'_i\le b_i$ exists, then we set
$b'_i=0$ and $L'_i=\text{x}$. The transition updates the edit budgets to
$(b_1-b'_1,\ldots,b_m-b'_m)$ and the leftovers to $(L'_1,\ldots,L'_m)$.

\item The set of accepting states is defined as
$F' = \{(q, (b_1,\ldots,b_m), (L_1,\ldots,L_m)) \in Q_A \mid q$ { is a final state of } $\calT$ { and there exists at least one } $L_i =(u,v) \neq \text{x} \text{ with }b_i \geq 0 \text{ and }$ $ d(u \omicron_0(q_f), v \omicron_i(q_f)) \le b_i\}$.

\end{itemize}

Note that for every word $w$ accepted by the NFA $\calA_{C,k}$, there exists an index $i \in [m]$ such that $L_i \neq \text{x}$, $b_i \ge 0$, and $ d(u \omicron_0(q_f), v \omicron_i(q_f)) \le b_i\}$. This implies that the distance between the outputs of $\calT_0$ and $\calT_i$ for the input $w$ is at most $k$.
This bound is witnessed by the sequence of edit operations encoded along the accepting run of $\calA_{C,k}$. 
Consequently, $\calA_{C,k}$ accepts exactly those words in the domain of $\calT$ for which the distance between $\calT_0$ and $\calT_i$ is bounded by $k$ for some $i \in [m]$. Therefore, to determine whether the relative distance between the transducers is at most $k$, it suffices to check whether $\dom(\calT) = \dom(\calA_{C,k})$. In other words, the relative distance between $f$ and $R$ with respect to $C$ is at most $k$ if and only if $L(\calA_{C,k}) = L$.
\end{proof}

\section{Discussion and conclusion}
\label{sec:conclusion}

In this work, we showed that the edit distance between finite-valued transducers is computable, extending the class of transducers for which edit distance computation is known to be decidable. The algorithm we present establishes decidability, but it is not optimised for efficiency. On analysing the algorithm, the first step of reducing the edit distance problem for finite-valued transducers (given by union of functional transducers) to multi-sequential transducers uses exponential space. Computing the edit distance for the resulting multi-sequential transducers then involves multiple instances of relative distance computation. Each instance requires deciding $k$-finiteness, which incurs another exponential blow-up in space, and checking the finiteness of the relative distance through a product automaton and distance computation for functional transducers, which requires doubly exponential space. Overall, these steps result in a 3-$\CF{EXPSPACE}$ procedure. Regarding the lower bound, computing the edit distance between two finite-valued transducers is at least as hard as checking their equivalence. This is because checking whether two transducers are equivalent reduces to computing their edit distance and verifying that it is zero. This problem is known to be PSPACE-hard. This follows from the fact that equivalence checking for nondeterministic automata — known to be PSPACE-complete~\cite{stock73}— can be reduced to equivalence checking for nondeterministic functional (i.e., 1-valued) transducers. Even for functional transducers, no better lower bound than PSPACE-hardness is known for computing the edit distance, while the best known upper bound is 2-EXPSPACE. In our setting, we rely on this result to compute the edit distance of multi-sequential transducers and incur an additional EXPSPACE overhead when reducing finite-valued transducers to the multi-sequential case. Improving the efficiency of this approach remains an important direction for future work.

The computation of relative distance for the metrics in \Cref{table:editdistance} relies on a characterisation connecting finiteness of edit distance with conjugacy (see Propositions~\ref{prop:conj} and~\ref{prop:levboundconj}). This property does not extend to all edit metrics; for instance, it fails for Hamming distance, where only substitutions are allowed. Nevertheless, all results except \Cref{lem:compreldis} remain valid for Hamming distance. \Cref{lem:compreldis} can be adapted using existing characterisations of Hamming distance (see Theorem~4.10 in \cite{editdistance} and Lemma~3.6 in \cite{approximate}). More generally, the method for computing relative distance depends on the set of allowed edit operations, and extending our techniques to more general distance notions is a natural direction for future work.

Promising directions for future work include improving the complexity of our constructions, extending to other distance measures, and adapting these methods to quantify approximate behaviours in broader classes of rational relations.

\bibliography{reference} 

\end{document}